\newtheorem{theorem}{Theorem}
\newtheorem{corollary}[theorem]{Corollary}
\newtheorem{assumption}[theorem]{Assumption}
\numberwithin{equation}{section}
\numberwithin{theorem}{section}
\newcommand{\rr}{{\mathbb{R}}}
\newcommand{\cp}{{\mathbb{C}_+}}
\newcommand{\ee}{{\mathbb{E}\,}}
\newcommand{\pp}{{\mathbb{P}}}
\newcommand{\oh}{{\mathcal{O}}}
\newcommand{\im}{{\operatorname{Im}\,}}
\newcommand{\re}{{\operatorname{Re }\,}}
\newcommand{\parder}[2]{{\frac{\partial #1}{\partial #2}}}
\newcommand{\supp}{{\operatorname{supp}\,}}
\newcommand{\tr}{{\operatorname{Tr}\,}}
\newcommand{\ket}[1]{{{|#1\rangle}}}
\newcommand{\bra}[1]{{{\langle #1|}}}
\newcommand{\ketbra}[1]{{{|#1\rangle\langle #1|}}}
\newcommand{\beq}[1]{\begin{equation} \label{#1}}
\newcommand{\eeq}{\end{equation}}
\newcommand{\stil}{{\tilde{S}}}
\newcommand{\gtil}{{\tilde{G}}}
\begin{document}
\addtokomafont{author}{\raggedright}
\title{\raggedright Non-Ergodic Delocalization in the Rosenzweig-Porter Model}
\author{\hspace{-.075in}Per von Soosten and Simone Warzel}
\date{\vspace{-.2in}}
\maketitle
\minisec{Abstract}
We consider the Rosenzweig-Porter model $H = V + \sqrt{T}\, \Phi$, where $V$ is a $N \times N$ diagonal matrix, $\Phi$ is drawn from the $N \times N$ Gaussian Orthogonal Ensemble, and $N^{-1} \ll T \ll 1$. We prove that the eigenfunctions of $H$ are typically supported in a set of approximately $NT$ sites, thereby confirming the existence of a previously conjectured non-ergodic delocalized phase. Our proof is based on martingale estimates along the characteristic curves of the stochastic advection equation satisfied by the local resolvent of the Brownian motion representation of $H$.
\bigskip

\section{Introduction}
This paper is concerned with the eigenfunctions of the Rosenzweig-Porter model~\cite{PhysRev.120.1698}, which is the simplest example of a random matrix with non-trivial spatial structure and provides a standard interpolation between integrability and chaos. Consisting of a rotationally invariant term and a potential, the model provides a highly simplified and analytically tractable toy model for the localization transition in disordered systems. Furthermore, the Rosenzweig-Porter model and its relatives, the Anderson models on the Bethe lattice and on the random regular graph, have recently received a renewed surge of interest related to the many-body localization transition~\cite{0295-5075-117-3-30003,0295-5075-115-4-47003,1367-2630-17-12-122002}. In that context, they provide basic examples of phases in which eigenfunctions delocalize over a large number of sites, but not uniformly over the entire volume.

The Hamiltonian in question is defined on $\ell^2(\{1,\dots,N\})$ by 
\beq{eq:hdef}H = V + \sqrt{T} \, \Phi,
\eeq
where $T > 0$, 
\[V = \sum_x V_x \ketbra{\delta_x}\]
is a sufficiently regular diagonal matrix with real entries $\{V_x\}$, and the Hermitian matrix $ \Phi $ has random  entries with normal distribution 
\[\langle \delta_y, \Phi \delta_x \rangle \sim \mathcal{N}\left( 0, \sqrt{\frac{1 + \delta_{xy}}{N}} \right)\]
which are independent up to the symmetry constraint. Here and throughout, $\delta_x \in \ell^2(\{1, \dots, N\})$ denotes the site basis element
\[\delta_x(u) = \begin{cases} 1 & \mbox{ if }  u = x\\ 0 & \mbox{ if } u \neq x\end{cases}\]
and $\delta_{xy} = \langle \delta_y, \delta_x \rangle$. We will follow Dyson's idea~\cite{MR0148397} and represent $H$ as a stochastic process
\[H_t = V + \Phi_t,\]
where the entries of $\Phi_t$ are normalized Brownian motions
\[\langle \delta_y, \Phi_t \delta_x \rangle = \sqrt{\frac{1 + \delta_{xy}}{N}} \, B_{xy}(t)\]
independent up to the symmetry constraint. Hence, $H_T$ has the same distribution as~\eqref{eq:hdef} and we consider only $H_T$ from now on. 

One expects the spectral behavior of $H_T$ to interpolate between $V$ and $\Phi_T$ as  $T$ increases. In terms of local eigenvalue statistics, recent works have established rigorously that there is a sharp transition at $T = N^{-1}$. On the one hand, if $T \gg N^{-1}$, Landon, Sosoe, and Yau ~\cite{landonyauarxiv,landonsosoeyauarxiv} proved that the local statistics fall into the Wigner-Dyson-Mehta universality class and agree asymptotically with those of the Gaussian Orthogonal Ensemble. On the other hand, if $T \ll N^{-1}$ and the $\{V_x\}$ are independent random variables, the local statistics converge to a Poisson point process and agree asymptotically with those of $V$~\cite{resflow}. Nevertheless, the present understanding of the transition  is incomplete with regard to the eigenfunctions of $H_T$. While it is known that the eigenfunctions are completely extended when $T \geq 1$~\cite{MR3134604}, and slightly weakened localization bounds were proved in~\cite{resflow} when $T \ll N^{-1}$, there are no previous rigorous results concerning the behavior of the eigenfunctions in the intermediate regime $N^{-1} \ll T \ll 1$. Moreover, the nature of the transition in the eigenfunctions in the related Anderson models on the Bethe lattice and random regular graph has been widely disputed even in the physics literature~\cite{PhysRevLett.113.046806, PhysRevLett.117.156601, PhysRevB.94.220203, biroliarxiv}. In the main result of this paper, we confirm the conjectured picture of Kravtsov et.~al.~\cite{1367-2630-17-12-122002} by proving that in the intermediate regime a normalized eigenfunction $\psi_\lambda$ corresponding to $\lambda \in \sigma(H_T)$ delocalizes across approximately those $NT \gg 1$ sites for which $V_x$ is closest to $\lambda$. This means that the mass of each eigenfunction spreads to a large number of sites. These sites nevertheless form a vanishing fraction of the entire volume $\{1, \dots, N\}$, indicating the existence of a non-ergodic delocalized phase. In~\cite{0295-5075-115-4-47003}, Facoetti, Vivo, and Biroli provided some clarifying analysis on the physics level of rigor in terms of second-order perturbation theory. Their work also explains how the abrupt transition in the local statistics does not contradict the gradual transition in the degree of eigenfunction localization, by arguing that the statistics retain a Poissonian character on mesoscopic scales greater than $T$. For rigorous results of a similar character concerning the eigenvalue statistics we refer to~\cite{duitsjohansson,landonhuangarxiv}.

Throughout this paper, we will fix a time $T = N^{-1 + \delta}$ with $\delta \in (0,1)$ and a spectral domain of the form
\[D = W + i[\eta, 1]\]
where $W \subset \rr$ is a bounded interval and $\eta = N^{-1 + \alpha}$ is a spectral scale whose parameter $\alpha > 0$ is fixed but may  be arbitrarily small. In what follows, we will require $V$ to possess some regularity that will be expressed in terms of the Stieltjes transform of the empirical eigenvalue measure
\[S_0(z) = \frac{1}{N} \sum_{x} \frac{1}{V_x - z}.\]
\begin{assumption}\label{thm:assump} There exist $\epsilon > 0$ and constants $K_m, K_l \in (0,\infty)$ such that
\begin{enumerate}
\item $|S_0(z)| \le K_m \log N$ uniformly in $\im z \geq \eta$ and
\item  $\im S_0(z) \ge K_l$ uniformly in $z \in \cp$ with $\im z \geq \eta$ and $\operatorname{dist}(z, D) \le \epsilon$.
\end{enumerate}
\end{assumption}

If the entries $V_x$ are drawn independently from a compactly supported density $\rho \in L^\infty$, we will show in Section \ref{sec:v} that Assumption \ref{thm:assump} is satisfied with asymptotically full probability for any interval $W$ on which $\rho$ is bounded below. The restriction of the conclusion of the following Theorem \ref{thm:nonergodicity} to the states in $W$ is then only a mild condition since $\rho$ coincides with the asymptotic density of states of $H_T$ (see, for example,~\cite{resflow}) and hence one expects that the majority of $\sigma(H_T)$ typically lies in $\supp \rho$.

\begin{theorem}\label{thm:nonergodicity} Let $\kappa > \delta > \theta$ and set
\[X_\lambda = \{x \in \{1, \dots, N\}: |\lambda - V_x| \le N^{-1 + \kappa}\}.\]
Then, there exists $\gamma > 0$ such that for any $p > 0$ and all sufficiently large $N$
\begin{enumerate}
\item The normalized eigenfunctions in $W$ carry only negligible mass outside $X_\lambda$:
\[\pp\left( \sup_{\lambda \in \sigma(H_T) \cap W} \sum_{x \in X_\lambda^c} |\psi_\lambda(x)|^2 > N^{-\gamma} \right) \le N^{-p}. \]
\item The normalized eigenfunctions in $W$ are maximally extended inside $X_\lambda$:
\[\pp\left( \sup_{\lambda \in \sigma(H_T) \cap W}  \|\psi_\lambda\|_\infty  > N^{-\theta/2} \right) \le N^{-p}.\]
\end{enumerate}
\end{theorem}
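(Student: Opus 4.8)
The plan is to deduce Theorem~\ref{thm:nonergodicity} from a high-probability upper bound on the diagonal resolvent entries $G_{T,xx}(z)$ of $G_T(z) = (H_T - z)^{-1}$ at the spectral scale $\eta = N^{-1+\alpha}$, for a fixed $\alpha > 0$ chosen small enough that $\alpha < \min\{\delta - \theta,\, \kappa - \delta\}$. The passage from the resolvent to the eigenfunctions is soft: expanding $\im G_{T,xx}(\lambda + i\eta)$ in an eigenbasis of $H_T$ and retaining only the contribution of the eigenvalue $\lambda$ itself yields the pointwise bound
\beq{eq:plan-ward}
|\psi_\lambda(x)|^2 \le \eta \, \im G_{T,xx}(\lambda + i\eta)
\eeq
for every $\lambda \in \sigma(H_T)$ and every $x$. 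Thus it suffices to show that, with probability at least $1 - N^{-p}$ for every $p$ and uniformly in $z = \lambda + i\eta$, $\lambda \in W$, and in $x$,
\beq{eq:plan-resolvent}
\im G_{T,xx}(z) \le (1 + o(1))\, \frac{\im z_0}{|V_x - z_0|^2},
\eeq
where $z_0 = z + T\, m_T(z)$ and $m_t$ is the Stieltjes transform of the free convolution of the empirical spectral measure of $V$ with the semicircle law of variance $t$, a deterministic object whose subordination relation $m_t(z) = S_0(z + t\, m_t(z))$ we use freely. Since $m_T(z) = S_0(z_0)$, Assumption~\ref{thm:assump} gives $\im m_T(z) \ge K_l$ and $|m_T(z)| \le K_m \log N$, so $|z_0 - z| \le T K_m \log N \ll N^{-1+\kappa}$ and $\im z_0 \asymp T$ up to logarithms. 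Part~(2) then follows from \eqref{eq:plan-ward}--\eqref{eq:plan-resolvent} and $\eta\,(\im z_0)^{-1} \lesssim \eta/T = N^{\alpha - \delta} \ll N^{-\theta}$. For part~(1), the bound $|V_x - z_0| \ge \tfrac12 |V_x - \lambda| \ge \tfrac12 N^{-1+\kappa}$ on $X_\lambda^c$ together with the counting estimate $\#\{x : |V_x - \lambda| \le R\} \lesssim N R \log N$, a consequence of Assumption~\ref{thm:assump}(1), gives after a dyadic decomposition of $X_\lambda^c$ that
\[
\eta \sum_{x \in X_\lambda^c} \im G_{T,xx}(z) \lesssim \eta\, \im z_0 \sum_{x \in X_\lambda^c} \frac{1}{|V_x - \lambda|^2} \lesssim \eta\, \im z_0\, N^{2-\kappa} \log N \lesssim N^{\alpha + \delta - \kappa}(\log N)^2 \le N^{-\gamma}
\]
for a suitable $\gamma \in (0, \kappa - \delta - \alpha)$. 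Uniformity over $\lambda \in \sigma(H_T) \cap W$ is obtained by proving \eqref{eq:plan-resolvent} on a polynomial net in $W$, using the crude bounds $\|G_T(z)\| \le \eta^{-1}$ and $\|\partial_z G_T(z)\| \le \eta^{-2}$, and noting that the summands entering or leaving $X_\lambda$ as $\lambda$ varies have $|V_x - \lambda| = N^{-1+\kappa} \gg T$ and are individually negligible; the union bound over the net and the $N$ sites costs only a polynomial factor, absorbed since \eqref{eq:plan-resolvent} holds with probability $1 - N^{-p}$ for every $p$.

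To prove \eqref{eq:plan-resolvent} I would run the Dyson flow $G_t(z) = (H_t - z)^{-1}$ from the explicit datum $G_{0,xx}(z) = (V_x - z)^{-1}$ and track $G_{t,xx}$ along the characteristics of the stochastic advection equation it satisfies. Itô's formula in the orthogonal symmetry class gives
\beq{eq:plan-ito}
dG_{t,xx}(z) = dM_{t,x}(z) + s_t(z)\, \partial_z G_{t,xx}(z)\, dt + \tfrac{1}{N}(G_t^3)_{xx}(z)\, dt,
\eeq
with $s_t(z) = N^{-1}\tr G_t(z)$ and $dM_{t,x}(z) = -(G_t(z)\, d\Phi_t\, G_t(z))_{xx}$. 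Since $(G_t^2)_{xx} = \partial_z G_{t,xx}$, the middle term is a transport term, and defining the characteristic $t \mapsto z_t$ by $\dot z_t = -m_t(z_t)$ with terminal condition $z_T = z$ turns $t \mapsto G_{t,xx}(z_t)$ into a martingale up to the drift $(\partial_z G_{t,xx})(z_t)\big(s_t(z_t) - m_t(z_t)\big) + \tfrac1N (G_t^3)_{xx}(z_t)$. In the deterministic approximation, the method of characteristics for the complex Burgers equation $\partial_t m_t = m_t\, \partial_z m_t$ shows that $m_t(z_t)$ and the subordination point $w_t = z_t + t\, m_t(z_t)$ are constant along the flow; hence $w_t \equiv z_0$, $\im z_t = \eta + (T - t)\, \im m_T(z)$ decreases linearly from $\asymp T$ to $\eta$, and $G_{t,xx}(z_t) \equiv (V_x - z_0)^{-1}$ is exactly the effective resolvent entry appearing in \eqref{eq:plan-resolvent}. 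The self-improving structure that closes the martingale and drift estimates is the Ward identity $\sum_a |G_{t,xa}(z)|^2 = \im G_{t,xx}(z)/\im z$: it represents the martingale bracket as
\beq{eq:plan-bracket}
d\langle M_x, \bar M_x\rangle_t = \tfrac{2}{N}\left(\frac{\im G_{t,xx}(z_t)}{\im z_t}\right)^2 dt
\eeq
and bounds $|(G_t^2)_{xx}(z_t)| \le \im G_{t,xx}(z_t)/\im z_t$ as well as, with an a priori bound on $\im s_t$, the correction $\tfrac1N|(G_t^3)_{xx}(z_t)|$ — all in terms of the very quantity being estimated.

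I would then establish \eqref{eq:plan-resolvent} by a bootstrap controlled by the stopping time $\tau = \inf\{t \le T : \im G_{t,xx}(z_t) > 2\, \im z_0/|V_x - z_0|^2\}$, which holds with room at $t = 0$. On $[0, \tau]$ the accumulated bracket \eqref{eq:plan-bracket} is at most a constant times $N^{-1}\big(\im z_0/|V_x - z_0|^2\big)^2 \int_0^T (\im z_t)^{-2}\, dt$, and the crucial point is that the linear decrease of $\im z_t$ makes $\int_0^T (\im z_t)^{-2}\, dt \asymp (\eta\, \im m_T(z))^{-1}$ rather than the naive $T\eta^{-2}$; the drift terms are likewise seen, after the same change of variables, to integrate to $o(1) \cdot \im z_0/|V_x - z_0|^2$ once the local law $|s_t(z_t) - m_t(z_t)| \lesssim (\log N)^{C}(N \im z_t)^{-1/2}$ is available. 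A Bernstein-type exponential martingale inequality then shows $\sup_{t \le T\wedge\tau}\big|G_{t,xx}(z_t) - (V_x - z_0)^{-1}\big|$ to be smaller than $\im z_0/|V_x - z_0|^2$ by a factor $(p \log N/N^\alpha)^{1/2} = o(1)$ with probability $1 - N^{-p}$, so that $\tau = T$ with that probability and \eqref{eq:plan-resolvent} follows; union bounds over $x$ and over the net in $\lambda$ conclude. The main obstacle I expect is making this bootstrap genuinely self-contained. Because $\eta = N^{-1+\alpha}$ is allowed arbitrarily close to $N^{-1}$, the estimates leave only logarithmic room, so the stopping-time and exponential-martingale bounds must be executed carefully rather than swallowed by constants; one must control $s_t$ at this fine scale — a local law that can be imported from~\cite{landonyauarxiv,landonsosoeyauarxiv} or run in parallel via the averaged version of the same martingale analysis and fluctuation averaging — one must dominate the orthogonal-symmetry correction $\tfrac1N (G_t^3)_{xx}$ uniformly in $x$ down to $\im z_t = \eta$, and one must verify that the characteristics stay in the region where Assumption~\ref{thm:assump} applies, for which the monotonicity $\im z_t \downarrow \eta$ and the bound $|z_t - z| \le T K_m \log N \to 0$ are the key inputs but have to be tracked jointly with the bootstrap.
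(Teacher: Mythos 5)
Your proposal is in the same spirit as the paper -- pass to the local resolvent via $|\psi_\lambda(x)|^2\le\eta\,\im G_{T,xx}(\lambda+i\eta)$, run the Dyson Brownian motion flow, and control $G_{t,xx}$ along characteristics of the advection equation -- but it takes a genuinely different route at the decisive step. You define the characteristic by the \emph{deterministic} vector field $\dot z_t=-m_t(z_t)$ coming from free convolution, which leaves a residual drift $(s_t-m_t)\partial_z G_{t,xx}$ in the stochastic differential; to bound it you must import (or re-derive) a local law for $s_t-m_t$ at scale $\eta=N^{-1+\alpha}$ with very high probability, and you then need a bootstrap via a stopping time together with a Bernstein/exponential-martingale bound to close the argument at the optimal $(1+o(1))$ level. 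The paper instead runs the characteristic along the \emph{random} field $\dot z_t=-S_t(z_t)$, which cancels the transport term exactly and eliminates any need to compare $S_t$ with a deterministic limit; in place of a local law it only needs the much weaker (and elementary) estimate $|S_t(\xi_t(z))-S_0(z)|\lesssim(N\eta)^{-1/2}$, proved by a direct quadratic-variation computation and the reflection principle (Theorem~\ref{thm:martingalebound}). For the local resolvent, rather than a bootstrap to $(1+o(1))$ accuracy, the paper is content to lose a small power $N^\ell$: after Burkholder--Davis--Gundy and a simple algebraic rearrangement of $\sup_s\im\tilde G_s$ (no stopping-time iteration), Theorem~\ref{thm:locresbound} gives $\im G_T(x,\xi_T(w))\le N^\ell\,\im G_0(x,w)$ with high probability, and the loss of $N^\ell$ is absorbed in the choice of $\kappa,\theta,\alpha,\gamma$ exactly as in your exponent bookkeeping. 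You correctly anticipate the two remaining issues -- keeping the characteristics in the region where Assumption~\ref{thm:assump} holds, and passing from a net to all of $W$ -- but these also need a substantive Lipschitz estimate for the flow (the paper's Theorem~\ref{thm:comparisonpoint}), obtained there by a Gr\"onwall argument combined with the same $d(\im\xi_s)/\im\xi_s$ change of variables; your proposal flags them without carrying out the comparable step. In short: your route is viable and would yield a sharper pointwise resolvent asymptotic, but it rests on an imported local law and a bootstrap that must be executed with only logarithmic room, whereas the paper's choice of random characteristics is precisely what makes the argument short, self-contained, and robust at the price of an $N^\ell$ loss that is harmless for the theorem.
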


Theorem \ref{thm:nonergodicity} becomes meaningful when $\kappa$ and $\theta$ are chosen close to $\delta$. Under Assumption \ref{thm:assump}, the number of sites in $X_\lambda$ is then bounded by
\[|X_\lambda | \le 2N N^{-1 + \kappa} \, \im S_0(\lambda + iN^{-1 + \kappa}) \approx N^\kappa \log N \approx NT.\]
Moreover, the fact that
\[|\psi_\lambda(x)|^2 \le N^{-\theta} \approx (NT)^{-1}\]
shows that the eigenfunctions are maximally extended inside the subvolume $X_\lambda$.

One of the main novelties of this paper is the method of proof of the central resolvent estimates behind Theorem \ref{thm:nonergodicity}. Instead of applying the Schur complement formula and related arguments as initiated in the seminal work of Erd\H{o}s, Schlein, and Yau~\cite{MR2481753} (see also~\cite{MR3699468} and references therein), we control the local resolvent
\[G_t(x,z) = \langle \delta_x, R_t(z) \delta_x \rangle, \quad R_t(z) =  (H_t - z)^{-1}\]
by stochastic PDE methods. More specifically, it was shown in~\cite{resflow} that these quantities evolve according to the stochastic advection equation
\beq{eq:burgerseq} dG_t(x, z) = \left(S_t(z) \parder{}{z} G_t(x,z) + \frac{1}{2N} \parder{^2}{z^2} G_t(x,z) \right) \, dt + dM_t(x,z),
\eeq
where
\[S_t(z) = \frac{1}{N} \tr R_t(z)\]
is the normalized trace and $M_t(x,z)$ is a martingale given explicitly in terms of $R_t(z)$ below, cf.~\eqref{eq:martingal}. The key idea for the analysis of~\eqref{eq:burgerseq} is to track the evolution of $G_0(x, z)$ along the random characteristic curve of the nonlinear term defined by 
\beq{eq:chardef1} \dot{z}_t= -S_t(z_t).
\eeq
While the last two terms of~\eqref{eq:burgerseq} may change significantly in time on local scales $\im z \approx N^{-1}$, we will prove that their influence becomes negligible along the characteristic curves $z_t$. Up to finite-volume corrections, the curve $z_t$ therefore behaves like a true characteristic with $G_T(x, z_T) \approx G_0(x,z)$. Our bounds are strong enough to conclude that for every $z \in \cp$ with $\im z \gg N^{-1}$ there exists $w \in \cp$ with $\im w \approx T$ and $|w - z| = \oh(T)$ such that
\beq{eq:approxgrel2} G_T(x, z) \approx G_0(x,w).
\eeq
This means that the effect on the eigenfunctions of perturbing $V$ by $\Phi_T$ locally consists of a shift in the energy followed by a smearing of the scales below $T$. In essence, the change in the local resolvent on the given time scale is through an energy renormalization.

The relations~\eqref{eq:chardef1} and \eqref{eq:approxgrel2} amount to strong finite volume versions of the famous semi-circular flow of Pastur~\cite{MR0475502} localized to a single site and energy. Since the change in the resolvent is only by energy renormalization,~\eqref{eq:approxgrel2} is a quantitative local version of the subordination relations in free probability~\cite{MR1605393,MR1228526}. This perspective has been successfully pursued in the random matrix literature (cf.~\cite{MR3699468}). In particular, our resolvent bounds in Theorems~\ref{thm:martingalebound} and~\ref{thm:locresbound} can also be derived from a combination of results in~\cite{landonyauarxiv, MR3134604}. In comparison, our new proof is simple, self-contained, and avoids heavy machinery by only using well-known properties of martingales such as the reflection principle or the Burkholder-Davis-Gundy inequality. However, our strategy relies strongly on the Gaussian nature of the perturbation.

A powerful method for studying the eigenfunctions of $H_t$ directly was devised by Bourgade and Yau~\cite{MR3606475} and developed further by Bourgade, Huang, and Yau~\cite{bourgade2017}, whose Theorem 2.1 may also be used to derive the second point of Theorem \ref{thm:nonergodicity} above. This method was adapted to the present problem by Benigni~\cite{benignibourgade}, where it yields the local eigenvector statistics even for mesoscopic Wigner perturbations. This result covers Theorem \ref{thm:nonergodicity}, albeit only in a mesoscopic spectral window and with lower probability.

The paper is organized as follows. In Section \ref{sec:s}, we study the properties of the characteristic curves for~\eqref{eq:burgerseq}. In Section \ref{sec:g}, we bound the growth of the local resolvent along the characteristic curves and use this to prove Theorem \ref{thm:nonergodicity}. Finally, in Section \ref{sec:v}, we prove Assumption \ref{thm:assump} for random $\{V_x\}$.

\section{Characteristic Curves} \label{sec:s}
In this section, we study the properties of the characteristic curve
\beq{eq:characteristic}\dot{z}_t = -S_t(z_t), \qquad z_0 = z
\eeq
of the transport equation~\eqref{eq:burgerseq}. However, it is technically more convenient to consider instead the process
\[\xi_t(z) = z_{t \wedge \tau_z}\]
which is stopped at
\[\tau_z = \{\inf t > 0: \im z_t \le \eta/2\}.\]
Regarding $R_t(\xi_t(z))$ as a function of the processes $\{B_{uv}(t)\}$ and $\xi_t(z)$, It\^{o}'s lemma shows that
\begin{align*} dG_t(x, \xi_t(z))  &= \frac{1}{N} \sum_{u \le v} \langle \delta_x, R_t(\xi_t(z)) P_{uv} R_t(\xi_t(z)) P_{uv} R_t(\xi_t(z))\delta_x \rangle \, dt \\
&-  \frac{1}{\sqrt{N}}\sum_{u \le v} \langle \delta_x, R_t(\xi_t(z)) P_{uv} R_t(\xi_t(z)) \delta_x \rangle \, dB_{uv}(t) +  \dot{\xi}_t(z) \parder{}{\xi}G_t(x,\xi_t(z)) \, dt
\end{align*}
with 
\[P_{uv} = \frac{1}{\sqrt{1 + \delta_{uv}}} \left( \ket{\delta_u}\bra{\delta_v} + \ket{\delta_v}\bra{\delta_u}\right).  \] 
The piecewise $C^1$ process $\xi_t(z)$ has vanishing covariation with all the $B_{uv}(t)$. The calculations in the proof of Theorem 2.1 in~\cite{resflow} then show that
\begin{align*} dG_t(x, \xi_t(z))  &=  \left(S_t(\xi_t(z)) \parder{}{\xi} G_t(x,\xi_t(z)) + \frac{1}{2N} \parder{^2}{\xi^2} G_t(x,\xi_t(z)) \right) \, dt\\
& + \dot{\xi}_t(z)  \parder{}{\xi}G_t(x,\xi_t(z)) \, dt + dM_t(x,z)
\end{align*}
with
\beq{eq:martingal}
dM_t(x,z) =  -\frac{1}{\sqrt{N}}\sum_{u \le v} \langle \delta_x, R_t(\xi_t(z)) P_{uv} R_t(\xi_t(z)) \delta_x \rangle \, dB_{uv}(t).
\eeq
If $\tau$ is any stopping time such that $\tau \le \tau_z$ almost surely,~\eqref{eq:characteristic} yields
\begin{align}\label{eq:charito}G_\tau(x, \xi_\tau(z)) - G_0(x,z) &= \int_0^\tau \! \frac{1}{2N} \parder{^2}{\xi^2} G_t(x,\xi_t(z)) \, dt \nonumber\\ & -\frac{1}{\sqrt{N}}\sum_{u \le v} \int_0^\tau \! \langle \delta_x, R_t(\xi_t(z)) P_{uv} R_t(\xi_t(z)) \delta_x \rangle \, dB_{uv}(t)
\end{align}
for the change in the local resolvent along the characteristic curve.

Our next goal is to show that with high probability the change in $S_t$ along the curve $\xi_t(z)$ is small for a sufficiently dense set of initial points $z$. Let
\[\tilde{D} \subset \{z \in \cp: \im z > \eta\}\]
be some set of finite cardinality $| \tilde{D} | < \infty$. The next theorem bounds the probability of the event
\[\mathcal{A}_S = \left\{\sup_{z \in \tilde{D}} \sup_{t \le \tau_z}  |S_t(\xi_t(z)) - S_0(z)|  >  \frac{4}{\sqrt{N \eta}}  \right\},\]
showing that with high probability $S_t(\xi_t(z))$ is approximately constant if $|\tilde{D}|$ grows only polynomially in $N$. In the statement of the theorem, and throughout this paper, $C, c \in (0,\infty)$ denote deterministic constants that are independent of $N$ but whose value may change from instance to instance.

\begin{theorem} \label{thm:martingalebound} Let $\beta \in (0,1)$. For every $z \in \cp$ with $\im z > \eta$ we have
\beq{eq:mgboundformula} \pp\left(\sup_{t \le \tau_z} |S_t(\xi_t(z)) - S_0(z)| > \frac{4}{(N \eta)^\beta} \right) \le C \exp\left(-c (N\eta)^{2(1-\beta)}\right)
\eeq
and therefore $\pp\left(\mathcal{A}_S \right) \le C|\tilde{D}| e^{-c N\eta}$.
\end{theorem}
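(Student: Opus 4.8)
The plan is to turn \eqref{eq:charito} into a semimartingale identity for $S_t(\xi_t(z))$, bound its drift and the quadratic variation of its martingale part \emph{deterministically} along the characteristic, and then apply a standard Gaussian tail bound. Summing \eqref{eq:charito} over $x$ and dividing by $N$, using $S_t=\frac1N\sum_xG_t(x,\cdot)$, $\parder{^2}{\xi^2}\tr R=2\tr R^3$, and $\tr(RP_{uv}R)=\tr(R^2P_{uv})$, gives for every $t\le\tau_z$
\[ S_t(\xi_t(z))-S_0(z)=\frac1{N^2}\int_0^t\tr R_s(\xi_s(z))^3\,ds+M_t,\qquad M_t=-\frac1{N\sqrt N}\sum_{u\le v}\int_0^t\tr\bigl(R_s(\xi_s(z))^2P_{uv}\bigr)\,dB_{uv}(s). \]
The transport term of \eqref{eq:burgerseq} has already cancelled against $\dot\xi_s\,\parder{}{\xi}G_s$ in \eqref{eq:charito}, which is precisely why only the weak $\frac1{2N}\parder{^2}{\xi^2}$-drift survives. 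Write $\xi_s=\xi_s(z)$ and let $\lambda_1(s),\dots,\lambda_N(s)$ be the eigenvalues of $H_s$.

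For the drift, $|\tr R_s(\xi_s)^3|\le\sum_i|\lambda_i(s)-\xi_s|^{-3}\le(\im\xi_s)^{-1}\sum_i|\lambda_i(s)-\xi_s|^{-2}$, and since $\sum_i|\lambda_i(s)-\xi_s|^{-2}=N\im S_s(\xi_s)/\im\xi_s$ the integrand is at most $N\im S_s(\xi_s)(\im\xi_s)^{-2}$. The crucial observation is that along the characteristic $\frac{d}{ds}\im\xi_s=-\im S_s(\xi_s)$ for $s<\tau_z$, so $\im S_s(\xi_s)\,ds=-d\,\im\xi_s$ and the integral telescopes:
\[ \frac1{N^2}\int_0^t\bigl|\tr R_s(\xi_s)^3\bigr|\,ds\le\frac1N\int_0^t\frac{-d\,\im\xi_s}{(\im\xi_s)^2}=\frac1N\Bigl(\frac1{\im\xi_t}-\frac1{\im z}\Bigr)\le\frac2{N\eta}, \]
because $\im\xi_t\ge\eta/2$ for $t\le\tau_z$. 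The same mechanism controls $\langle M\rangle$. Since $R_s(\xi_s)$ is normal (it is $(H_s-\xi_s)^{-1}$ with $H_s$ real symmetric, hence diagonal in a real orthonormal basis), a short computation with the $P_{uv}$ gives $\sum_{u\le v}|\tr(R_s(\xi_s)^2P_{uv})|^2\le2\sum_{u,v}|\langle\delta_u,R_s(\xi_s)^2\delta_v\rangle|^2=2\sum_i|\lambda_i(s)-\xi_s|^{-4}\le(\im\xi_s)^{-2}\cdot2\sum_i|\lambda_i(s)-\xi_s|^{-2}=2N\im S_s(\xi_s)(\im\xi_s)^{-3}$. Telescoping once more,
\[ \langle M\rangle_{\tau_z}=\frac1{N^3}\sum_{u\le v}\int_0^{\tau_z}\bigl|\tr(R_s(\xi_s)^2P_{uv})\bigr|^2\,ds\le\frac2{N^2}\int_0^{\tau_z}\frac{\im S_s(\xi_s)}{(\im\xi_s)^3}\,ds\le\frac1{N^2(\im\xi_{\tau_z})^2}\le\frac4{(N\eta)^2} \]
deterministically.

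With these two deterministic bounds the remainder is routine. Decompose $M_t=\re M_t+i\,\im M_t$; both are continuous real martingales driven by the $B_{uv}$ with bracket at most $\langle M\rangle_{\tau_z}\le4(N\eta)^{-2}$ up to time $\tau_z$, so by the Dambis--Dubins--Schwarz time change and the reflection principle $\pp(\sup_{t\le\tau_z}|\re M_t|>a)\le2\exp(-a^2(N\eta)^2/8)$, and likewise for $\im M_t$. Since $\beta<1$ the drift bound $2(N\eta)^{-1}$ is at most $(N\eta)^{-\beta}$ for all large $N$, so the event in \eqref{eq:mgboundformula} forces $\sup_{t\le\tau_z}|M_t|>3(N\eta)^{-\beta}$, hence $\sup_{t\le\tau_z}|\re M_t|>\frac32(N\eta)^{-\beta}$ or $\sup_{t\le\tau_z}|\im M_t|>\frac32(N\eta)^{-\beta}$. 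Taking $a=\frac32(N\eta)^{-\beta}$ and a union bound over the two parts yields
\[ \pp\Bigl(\sup_{t\le\tau_z}|S_t(\xi_t(z))-S_0(z)|>\frac4{(N\eta)^\beta}\Bigr)\le4\exp\bigl(-\tfrac9{32}(N\eta)^{2(1-\beta)}\bigr), \]
the finitely many small $N$ being absorbed into the constant; this is \eqref{eq:mgboundformula}. Specializing to $\beta=\frac12$ (so that $\frac4{(N\eta)^{1/2}}=\frac4{\sqrt{N\eta}}$ and $2(1-\beta)=1$) and taking a union bound over $\tilde D$ gives $\pp(\mathcal A_S)\le C|\tilde D|e^{-cN\eta}$, the polynomial growth of $|\tilde D|$ being harmless against $e^{-cN\eta}=e^{-cN^\alpha}$.

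The one point requiring genuine insight is the telescoping identity $\im S_s(\xi_s)\,ds=-d\,\im\xi_s$: a priori the resolvent factors appearing both in the drift and in $\langle M\rangle$ are as large as $(\im\xi_s)^{-1}\approx\eta^{-1}$, so a naive estimate would produce an extra factor of order $\tau_z/\eta^2$ and be useless. The cancellation encodes the fact that the characteristic passes quickly through regions of small imaginary part — it spends time of order $\im\xi_s$ near height $\im\xi_s$ — and is exactly what pins both contributions at the scale $(N\eta)^{-1}$ rather than merely making them vanish. Everything else — the It\^o bookkeeping, the normality computation for $\sum_{u\le v}|\tr(R^2P_{uv})|^2$, and the martingale tail via reflection — is standard.
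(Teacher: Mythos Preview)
Your proof is correct and follows essentially the same route as the paper: averaging \eqref{eq:charito} to get a semimartingale decomposition for $S_t(\xi_t(z))$, bounding both the drift and the quadratic variation deterministically via the telescoping identity $\im S_s(\xi_s)\,ds=-d\,\im\xi_s$, and then applying the Dambis--Dubins--Schwarz representation together with the reflection principle. The only cosmetic difference is that you handle the complex-valued martingale by splitting into real and imaginary parts explicitly, whereas the paper leaves this implicit.
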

\begin{proof} By averaging~\eqref{eq:charito} over all $x$, we see that the process
\[\stil_t = S_{t \wedge \tau_z}(\xi_t(z))\]
satisfies
\begin{align*} \stil_t - \stil_0 &= \int_0^{t \wedge \tau_z} \! \frac{1}{2N} \parder{^2}{\xi^2} S_s(\xi_s(z)) \, ds \\
&-\frac{1}{\sqrt{N^3}}\sum_x \sum_{u \le v} \int_0^{t \wedge \tau_z} \! \langle \delta_x, R_s(\xi_s(z)) P_{uv} R_s(\xi_s(z)) \delta_x \rangle \, dB_{uv}(s).
\end{align*}
The drift component of $\stil$ is bounded by
\begin{align*}
\int_0^{t \wedge \tau_z} \! \frac{1}{2N} \left| \parder{^2}{\xi^2} S_s(\xi_s(z))\right| \, ds &\le \frac{1}{N}  \int_0^{t \wedge \tau_z}\frac{\im S_s(\xi_s(z))}{(\im \xi_s(z))^2}\, ds\\
&= \frac{1}{N}  \int_0^{t \wedge \tau_z}\frac{-d (\im \xi_s(z))}{(\im \xi_s(z))^2}\\
&= \frac{1}{N \, \im \xi_t(z)} - \frac{1}{N \, \im z} \le \frac{2}{N \eta}.
\end{align*}
The martingale part of $\stil$ is given by
\begin{align*}M_t &=  -\frac{1}{\sqrt{N^3}}\sum_x \sum_{u \le v} \int_0^{t \wedge \tau_z} \! \langle \delta_x, R_s(\xi_s(z)) P_{uv} R_s(\xi_s(z)) \delta_x \rangle \, dB_{uv}(s)\\
&= -\frac{1}{\sqrt{N^3}} \sum_{u, v}  \sqrt{1 + \delta_{uv}} \int_0^{t \wedge \tau_z} \! \langle \delta_v, R_s(\xi_s(z))^2 \delta_u\rangle \, dB_{uv}(s) \, . 
\end{align*}
Its quadratic variation may be expressed as
\begin{align*} [M]_t &\le  \frac{2}{N^3} \int_0^{t \wedge \tau_z} \!  \sum_{u, v} \left|\langle \delta_v, R_s(\xi_s(z))^2 \delta_u\rangle \right|^2 \, ds\\
&\le \frac{2}{N^2} \int_0^{t \wedge \tau_z} \!  \frac{\im S_s(\xi_s(z))}{ (\im \xi_s(z))^3} \, ds.\\
&=  \frac{1}{N^2}  \int_0^{t \wedge \tau_z} \! \frac{-2}{(\im \xi_s(z))^3} \, d(\im \xi_s(z))\\
&= \frac{1}{(N \, \im \xi_t(z))^2} - \frac{1}{(N \, \im z)^2} \le \frac{4}{(N \eta)^2}.
\end{align*}
It follows that there exists a Brownian motion $\tilde{B}$ such that
\[ \sup_t |\stil_t - \stil_0| \le \sup_t \left(\frac{2}{N \eta} + \left| \tilde{B}_{[M]_t} \right| \right) \le \frac{2}{(N \eta)^\beta} + \sup_{t \le 4/(N\eta)^2} |\tilde{B}_t|.\]
Applying the reflection principle to $\tilde{B}$ we obtain \eqref{eq:mgboundformula} and the second assertion follows from the union bound.
\end{proof}

The estimate on the fluctuations in~\eqref{eq:mgboundformula} is close to optimal regarding the power of $N\eta$. It can be inserted into the usual stability analysis and bootstrap, yielding the local deformed semicircle law with an optimal error bound without employing the intricate fluctuation averaging mechanism of~\cite{MR3119922}.
 
Once we know that $S_t(\xi_t(z))$ is approximately constant, this term can be inserted into integrals involving $\xi_t(z)$ more or less at will, and the substitution trick from Theorem \ref{thm:martingalebound} gives bounds improving on the trivial bound by a factor of $\eta$. We illustrate this in the following corollary, which will prove useful in extending our method to the local resolvents.

\begin{corollary} \label{thm:inserts} If $\mathcal{A}_S$ does not occur, then
\[ \int_0^{t \wedge \tau_z} \! \frac{1}{(\im \xi_s(z))^2} \, ds \le \frac{4}{K_l \eta}\]
for all $t > 0$ and $z \in D \cap \tilde{D}$.
\end{corollary}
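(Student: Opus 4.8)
The plan is to recognise the integrand as $\bigl(\im S_s(\xi_s(z))\bigr)^{-1}$ times $\frac{\im S_s(\xi_s(z))}{(\im\xi_s(z))^2}$, the latter being, up to a sign, the $s$-derivative of $1/\im\xi_s(z)$ by the defining ODE \eqref{eq:characteristic}. This is exactly the manipulation already used to bound the drift of $\stil$ in the proof of Theorem \ref{thm:martingalebound}: since $d(\im\xi_s(z)) = -\im S_s(\xi_s(z))\,ds$ for $s \le \tau_z$,
\[\int_0^{t\wedge\tau_z}\frac{\im S_s(\xi_s(z))}{(\im\xi_s(z))^2}\,ds = \frac{1}{\im\xi_t(z)} - \frac{1}{\im z} \le \frac{1}{\im\xi_t(z)} \le \frac{2}{\eta},\]
the last step because $\xi$ is stopped at $\tau_z$, where $\im z_t$ equals $\eta/2$. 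The change of variables is legitimate because $\im S_s(w) > 0$ for every $w \in \cp$ (it is an average of Poisson kernels), so $s \mapsto \im\xi_s(z)$ is strictly decreasing on $[0,\tau_z]$.

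It then remains to control the prefactor $1/\im S_s(\xi_s(z))$ uniformly in $s \le \tau_z$ on the complement of $\mathcal{A}_S$. By the definition of $\mathcal{A}_S$ and because $z \in \tilde D$, when $\mathcal{A}_S$ does not occur we have $|S_s(\xi_s(z)) - S_0(z)| \le 4/\sqrt{N\eta}$ for all $s \le \tau_z$, hence $\im S_s(\xi_s(z)) \ge \im S_0(z) - 4/\sqrt{N\eta}$. For $z \in D$ one has $\operatorname{dist}(z,D) = 0 \le \epsilon$ and $\im z \ge \eta$, so the second part of Assumption \ref{thm:assump} gives $\im S_0(z) \ge K_l$; since $N\eta = N^\alpha \to \infty$, for all large $N$ the correction $4/\sqrt{N\eta}$ is at most $K_l/2$, whence $\im S_s(\xi_s(z)) \ge K_l/2$ throughout. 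Pulling this constant out of the integral and combining with the display above gives $\frac{2}{K_l}\cdot\frac{2}{\eta} = \frac{4}{K_l\eta}$.

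There is no serious obstacle: once Theorem \ref{thm:martingalebound} is available this is essentially a one-line substitution. The only points demanding a little care are to verify that the fluctuation size $4/\sqrt{N\eta} = 4N^{-\alpha/2}$ sits comfortably below $\frac{1}{2}\im S_0(z)$ — which is where the standing assumption that $N$ is large enters — and to note that it is precisely the restriction to $z \in D$, rather than merely $z \in \tilde D$, that makes the lower bound $\im S_0(z) \ge K_l$ from Assumption \ref{thm:assump} applicable.
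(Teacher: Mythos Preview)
Your argument is correct and is essentially identical to the paper's proof: both insert the factor $\im S_s(\xi_s(z))\geq K_l/2$ (valid on $\mathcal{A}_S^c$ for $z\in D\cap\tilde D$ and large $N$) and then integrate $\frac{\im S_s(\xi_s(z))}{(\im\xi_s(z))^2}\,ds = -\,d\bigl(1/\im\xi_s(z)\bigr)$ to get the bound $2/\eta$. You have also correctly flagged the two points of care (the need for large $N$ and the role of $z\in D$ in invoking Assumption~\ref{thm:assump}).
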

\begin{proof}
If $\mathcal{A}_S$ does not occur and $z \in D \cap \tilde{D}$, then for sufficiently large $N$
\[\inf_{s \le t \wedge \tau_z} \im S_s(\xi_s(z)) \geq \im S_0(z) - \frac{4}{\sqrt{N\eta}} \geq \frac{K_l}{2}\]
where $K_l$ is the lower bound from Assumption \ref{thm:assump}. Hence,
\begin{align*} \int_0^{t \wedge \tau_z} \! \frac{1}{(\im \xi_s(z))^2} \, ds &\le \frac{2}{K_l} \int_0^{t \wedge \tau_z} \! \frac{\im S_s(\xi_s(z))}{(\im \xi_s(z))^2} \, ds\\
&= \frac{2}{K_l} \int_0^{t \wedge \tau_z} \! \frac{d(\im \xi_s(z))}{(\im \xi_s(z))^2} \le \frac{4}{K_l \eta}.
 \end{align*}
\end{proof}

The Picard-Lindel\"{o}f theorem and the Herglotz property of $S_s$ imply that, almost surely, for every $z \in D$ there exists a $w \in \cp$ with $\xi_T(w) = z$ satisfying the a-priori deterministic bound
\[|w - z| \le \int_0^T \! |S_s(\xi_s(w))| \, ds \le \frac{T}{\eta}.\]
In order for Theorem \ref{thm:martingalebound} to be useful in the study of the function $S_T$, we need to guarantee that a sufficiently dense subset of $D$ is of the form $\xi_T(w)$ with $w \in \tilde{D}$. To this end, we define the distance
\beq{eq:rdef}r = \min \left\{ \Upsilon T \eta^{2},  N^{-2\theta} \eta^3, N^{-(1+2\gamma)} \eta^{3}\right\},
\eeq
where
\beq{eq:kdef}\Upsilon = \sup_{\im z > \eta} |S_0(z)| + \frac{4}{\sqrt{N\eta}} \le C \log N,
\eeq
and $\gamma, \theta > 0$ are the parameters from the statement of Theorem \ref{thm:nonergodicity}. We now require $\tilde{D}$ to be such that
\[\operatorname{dist}(z, \tilde{D}) \le r\]
for all $z \in \cp$ with $\im z > \eta$ and $\operatorname{dist}(z, D) \le T/\eta$. The grid $\tilde{D}$ can hence be chosen such that its cardinality is bounded by
\[|\tilde{D}| \le C(\eta r)^{-2}.\]
The following theorem provides a Lipschitz constant for the characteristic flow which grows only polynomially in $\eta$. The resulting bound is a significant improvement on the exponential bound provided by the direct application of Gr\"{o}nwall's inequality and enables us to keep the cardinality of $\tilde{D}$ polynomial in $N$.

\begin{theorem} \label{thm:comparisonpoint} Suppose $\mathcal{A}_S$ does not occur and $N$ is sufficiently large. Then for every $z \in D$ there exists $w \in  \tilde{D}$ such that:
\begin{enumerate}
\item $ \displaystyle |\xi_T(w) - z| \le  C \eta^{-2} r $,
\item  $ \displaystyle |w - z| \le C\Upsilon T \ $ with $ \Upsilon $ as in~\eqref{eq:kdef}, and
\item $\im w \geq \frac{1}{2}K_l T$ with $K_l$ defined in Assumption \ref{thm:assump}.
\end{enumerate}
\end{theorem}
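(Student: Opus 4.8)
The plan is to obtain $w$ by perturbing the exact backward characteristic through $z$. First I would fix $z\in D$ and take the point $w_\ast\in\cp$ produced by the Picard--Lindel\"of argument recalled above, with $\xi_T(w_\ast)=z$ and $|w_\ast-z|\le T/\eta$. Since $\frac{d}{dt}\im\xi_t(w_\ast)=-\im S_t(\xi_t(w_\ast))<0$ by the Herglotz property and $\im\xi_T(w_\ast)=\im z\ge\eta$, the curve $\xi_t(w_\ast)$ never reaches the stopping level $\eta/2$, hence $\tau_{w_\ast}>T$ and $\eta<\im w_\ast\le 1+T/\eta\le 2\eta^{-1}$. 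As $\operatorname{dist}(w_\ast,D)\le T/\eta$, recalling that $\tilde D$ is $r$-dense in $\{\zeta\in\cp:\im\zeta>\eta,\ \operatorname{dist}(\zeta,D)\le T/\eta\}$, I may choose $w\in\tilde D$ with $|w-w_\ast|\le r$. Everything then reduces to showing that $\xi_T$ does not move $w$ far from $\xi_T(w_\ast)=z$; the key tool is a Lipschitz bound for the characteristic flow with a constant only polynomial in $\eta^{-1}$.

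To obtain that bound I would set $u_t=\xi_t(w)-\xi_t(w_\ast)$, $g(t)=|u_t|$, and estimate, for $t\le\min(\tau_w,T)$,
\[
g'(t)\le\bigl|S_t(\xi_t(w))-S_t(\xi_t(w_\ast))\bigr|=|u_t|\,\Bigl|\tfrac1N\tr\bigl[(H_t-\xi_t(w))^{-1}(H_t-\xi_t(w_\ast))^{-1}\bigr]\Bigr|
\]
using the resolvent identity; passing to the eigenbasis of $H_t$ and applying $|\mu-\zeta_1|^{-1}|\mu-\zeta_2|^{-1}\le\tfrac12(|\mu-\zeta_1|^{-2}+|\mu-\zeta_2|^{-2})$ together with $\tfrac1N\sum_i|\mu_i-\zeta|^{-2}=\im S_t(\zeta)/\im\zeta$ gives
\[
g'(t)\le\frac{g(t)}{2}\left(\frac{\im S_t(\xi_t(w))}{\im\xi_t(w)}+\frac{\im S_t(\xi_t(w_\ast))}{\im\xi_t(w_\ast)}\right).
\]
The crucial observation is that the substitution $d(\im\xi_s(v))=-\im S_s(\xi_s(v))\,ds$, exactly as in Corollary~\ref{thm:inserts}, turns each integrated ratio into a logarithm, $\int_0^t\frac{\im S_s(\xi_s(v))}{\im\xi_s(v)}\,ds=\ln\frac{\im v}{\im\xi_t(v)}$, so Gr\"onwall's inequality yields
\[
g(t)\le g(0)\left(\frac{\im w}{\im\xi_t(w)}\cdot\frac{\im w_\ast}{\im\xi_t(w_\ast)}\right)^{1/2}\le C\eta^{-2}\,|w-w_\ast|,
\]
using $\im w,\im w_\ast\le C\eta^{-1}$ and $\im\xi_t(v)\ge\eta/2$. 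If $\tau_w<T$ were to hold, then at $t=\tau_w$ one would have $\im\xi_{\tau_w}(w)=\eta/2$ and $\im\xi_{\tau_w}(w_\ast)\ge\im z\ge\eta$, so $g(\tau_w)\ge\eta/2$, contradicting $g(\tau_w)\le C\eta^{-2}r\le C\eta^{-2}N^{-(1+2\gamma)}\eta^3\to0$; hence $\tau_w\ge T$, and evaluating the Lipschitz bound at $t=T$ gives assertion~(1), $|\xi_T(w)-z|\le C\eta^{-2}r$.

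Given this, assertion~(2) follows from $|w-z|\le|w-\xi_T(w)|+|\xi_T(w)-z|$: since $\mathcal{A}_S$ does not occur and $w\in\tilde D$, we have $|S_s(\xi_s(w))|\le|S_0(w)|+4(N\eta)^{-1/2}\le\Upsilon$, hence $|w-\xi_T(w)|\le\Upsilon T$, while $|\xi_T(w)-z|\le C\eta^{-2}r\le C\Upsilon T$ by the definition~\eqref{eq:rdef} of $r$. For assertion~(3), integrating $\frac{d}{ds}\im\xi_s(w_\ast)=-\im S_s(\xi_s(w_\ast))$ gives $\im w_\ast\ge\int_0^T\im S_s(\xi_s(w_\ast))\,ds$, so I need a lower bound on $\im S_s(\xi_s(w_\ast))$ along the whole curve. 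The preceding bounds force $\xi_s(w_\ast)$ to remain within $C\Upsilon T=o(1)$ of $D$ while $\im\xi_s(w_\ast)\ge\eta$, so $\im S_0$ is at least $K_l$ both at $w$ and along $\xi_s(w_\ast)$ by Assumption~\ref{thm:assump}; combining this with the $\mathcal{A}_S$ bound at $w$ and transferring from $\xi_s(w)$ to $\xi_s(w_\ast)$ at the cost of $\sup_s|\partial_zS_s|\cdot|\xi_s(w)-\xi_s(w_\ast)|\le C\eta^{-4}r=o(1)$ gives $\im S_s(\xi_s(w_\ast))\ge K_l(1-o(1))$ uniformly in $s\le T$. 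Hence $\im w_\ast\ge K_lT(1-o(1))$ and $\im w\ge\im w_\ast-r\ge\tfrac12K_lT$ for large $N$.

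The main obstacle is the Lipschitz estimate of the second paragraph. The naive Lipschitz constant of $z\mapsto S_t(z)$ on $\{\im z\ge\eta/2\}$ is of order $\eta^{-1}\log N$, so a direct Gr\"onwall argument would give a constant of size $\exp(cT\eta^{-1}\log N)$, which is super-polynomial and would force $\tilde D$ to have exponentially many points. Extracting the factor $\im S_t/\im\xi_t$ from the resolvent identity and absorbing it against the monotone decay of $\im\xi_t$ is precisely what reduces the constant to a genuine power of $\eta^{-1}$ and keeps $|\tilde D|\le C(\eta r)^{-2}$ polynomial in $N$. The apparent interdependence of the three assertions --- the proximity of $\xi_s(w_\ast)$ to $D$ used in~(3) relies on~(2) --- is harmless, since the Lipschitz bound and $\tau_w\ge T$ are established first, independently of how small the parameters $r$ and $\Upsilon T$ are.
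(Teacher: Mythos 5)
Your proof is correct, and for assertions (1) and (2) it follows the paper's argument essentially verbatim: the crucial step is exactly the one you isolate, namely extracting the factor $\im S_t/\im \xi_t$ from the resolvent-identity difference and using the substitution $d(\im\xi_t)=-\im S_t\,dt$ so that Gr\"onwall returns the polynomial factor $\sqrt{\tfrac{\im w}{\im\xi_t(w)}\tfrac{\im w_\ast}{\im\xi_t(w_\ast)}}\le C\eta^{-2}$ rather than the useless $\exp(cT\eta^{-1}\log N)$; the contradiction argument for $\tau_w\ge T$ and the triangle-inequality decomposition of $|w-z|$ in (2) are the same as the paper's.

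The one place you deviate is assertion (3). The paper exploits that $w\in\tilde D$ and $\tau_w\ge T$ to integrate the flow for the grid point $w$ itself,
\[
\im w=\im\xi_T(w)+\int_0^T\im S_s(\xi_s(w))\,ds\ \ge\ T\Bigl(\im S_0(w)-\tfrac{4}{\sqrt{N\eta}}\Bigr)\ \ge\ \tfrac12K_lT,
\]
on $\mathcal{A}_S^c$, so it needs no further transfer. You instead work with the exact backward characteristic $w_\ast\notin\tilde D$ and therefore have to push the $\mathcal{A}_S$ lower bound from $\xi_s(w)$ to $\xi_s(w_\ast)$ by a deterministic Lipschitz estimate $|\partial_\zeta S_s|\le C\eta^{-2}$ on $\{\im\zeta\ge\eta/2\}$, incurring an extra error $C\eta^{-4}r=O(N^{-2\gamma-\alpha})$. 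That error is indeed $o(1)$ by the definition~\eqref{eq:rdef} of $r$, so the argument closes, but it is a detour: once $\tau_w\ge T$ is known, the same integration applied to $w$ rather than $w_\ast$ gives (3) directly and avoids the deterministic derivative bound on $S_s$ entirely.
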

\begin{proof} By the construction of $\tilde{D}$, for any $ z \in D $ there exist $w_0 \in \cp$ with $\xi_T(w_0) = z$ and $w \in \tilde{D}$ with $|w - w_0| \le r$. If $t \le \tau_{{w_0}} \wedge \tau_{w}$, the evolution~\eqref{eq:characteristic} yields
\begin{align*} |\xi_t(w_0) - \xi_t(w)| &\le |{w_0} - w| +  \int_0^{t} \! |S_s(\xi_s(w_0)) - S_s(\xi_s(w))| \, ds\\
&\le |{w_0} - w| +  \frac{1}{N} \int_0^{t} \! \sum_{i} \frac{|\xi_s(w_0) - \xi_s(w)|}{|\lambda_i(s) - \xi_s(w_0)| |\lambda_i(s) - \xi_s(w)|}\, ds.
\end{align*}
Using the inequality $ab \le \frac{1}{2}(a^2 + b^2)$, the integral in the last term is bounded by
\begin{align*}
& \frac{1}{2} \int_0^{t} \!  |\xi_s(w_0) - \xi_s(w)| \frac{1}{N} \sum_{i} \left( \frac{1}{|\lambda_i(s) - \xi_s(w_0)|^2} + \frac{1}{ |\lambda_i(s) - \xi_s(w)|^2} \right)\, ds\\
&\le  \frac{1}{2} \int_0^{t} \!  |\xi_s(w_0) - \xi_s(w)| \left( \frac{\im S_s(\xi_s(w_0))}{\im \xi_s(w_0)} + \frac{\im S_s(\xi_s(w))}{\im \xi_s(w)} \right)\, ds
\end{align*}
so Gr\"{o}nwall's inequality shows that
\begin{align*} \log \frac{|\xi_t(w_0) - \xi_t(w)|}{|{w_0} - w|} &\le \frac{1}{2}\int_0^{t} \!  \frac{\im S_s(\xi_s(w_0))}{\im \xi_s(w_0)} +  \frac{\im S_s(\xi_s(w))}{\im \xi_s(w)} \, ds\\
&= - \frac{1}{2}\int_0^{t} \!  \frac{ d(\im \xi_s(w_0))}{\im \xi_s(w_0)} - \frac{1}{2}\int_0^{t} \! \frac{d( \im \xi_s(w))}{\im \xi_s(w)}\\
&= \log \sqrt{\frac{\im {w_0}}{\im \xi_t(w_0)} \frac{\im w} {\im \xi_t(w)}}.
\end{align*}
Thus, using $ \im w \leq \im w_0 + r  $, $ \im w_0 \leq 1 + T \eta^{-1} \leq C \eta^{-1} $, and the stopping rules, we obtain
\beq{eq:lipschitzflow} |\xi_t(w_0) - \xi_t(w)| \le C \eta^{-2} |{w_0} - w| \le C \eta^{-2} r
\eeq
for all $t \le \tau_{{w_0}} \wedge \tau_w$. Since $\im \xi_T(w_0) = z$ and $\im \xi_t(w_0)$ is decreasing, $\tau_{{w_0}} > T$, so~\eqref{eq:lipschitzflow} and the definition of $r$ shows that for sufficiently large $N$ we have $|\xi_t(w_0) - \xi_t(w)| \le \eta/4$ for all $t \le T \wedge \tau_w$. If it were true that $\tau_w < T$, we would obtain the contradiction
\[\frac{\eta}{2} = \im \xi_{\tau_w}(w) \geq \im \xi_{\tau_w}(w_0) - \frac{\eta}{4} \geq \eta - \frac{\eta}{4}.\]
Hence~\eqref{eq:lipschitzflow} is valid for $t = T$, establishing the first claim of the theorem. If $\mathcal{A}_S$ does not occur, then
\[|\xi_T(w) - w| \le \int_0^T \! |S_s(\xi_s(w))| \, ds \le \int_0^T \! |S_0(w)| + \frac{4}{\sqrt{N\eta}} \, ds \le \Upsilon T\]
since $w \in \tilde{D}$. Hence the definition of $r$ and~\eqref{eq:lipschitzflow} yield
\[ |w - z| \le |w - \xi_T(w)| + |\xi_T(w)- z| \le \Upsilon T+ C\Upsilon T = C\Upsilon T.\]
proving the second claim of the theorem. The second claim also implies that $\operatorname{dist}(w, D) \le \epsilon$ for sufficiently large $N$ so that Assumption \ref{thm:assump} guarantees $\im S_0(w) \geq K_l$. On the complement of $\mathcal{A}_S$ this yields
\[\im w = \im \xi_T(w) + \int_0^T \! \im S_s(\xi_s(w)) \, ds \geq  T\left(\im S_0(w) - \frac{4}{\sqrt{N \eta}}\right) \geq \frac{K_l T}{2}\]
for sufficiently large $N$.
\end{proof}

\section{Local Resolvent Bounds} \label{sec:g}
Since $S_t(z)$ is entirely featureless regarding a possible localization transition when $\im z \gg N^{-1}$, we now turn our attention to controlling the local resolvents $G_t(x,z)$ along the characteristic $\xi_t(z)$. Unlike $S_t$, the function $G_t(x,\cdot)$ may be heavily concentrated around certain energies in non-ergodic regimes. Therefore, its derivative may be large in all directions and we cannot expect an exact analogue of Theorem \ref{thm:martingalebound} to hold true for all energies. However, one may hope that the change in $G_t(x, z)$ along the characteristic is small in those regions where $G_t(x,z)$ itself is small. We encode this phenomenon in the event
\[\mathcal{A}_G(\ell) = \left\{\sup_{x} \sup_{z \in D \cap \tilde{D}} \sup_{s \le \tau_z} \frac{\im G_s(x,\xi_s(z))}{\im G_0(x,z)} > N^\ell \right\},\]
whose probability does in fact decay as $N \to \infty$. The proof is somewhat reminiscent of a Gr\"{o}nwall-type argument for martingales, which is greatly facilitated by the built-in control of the running maximum. Still, the basic mechanism behind the following argument is different from the stochastic Gr\"{o}nwall lemmas that previously appeared in~\cite{MR2718125,MR3078830}.

\begin{theorem} \label{thm:locresbound} For every $\ell > 0$ and $p > 0$ we have
\[\pp\left(\mathcal{A}_G(\ell) \right) \le N^{-p} \]
for all sufficiently large $N$.
\end{theorem}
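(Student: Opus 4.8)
The plan is to bound the growth of $\im G_s(x,\xi_s(z))$ along the characteristic by running a martingale argument on the stopped process with the running maximum built in, exactly as the paragraph before the statement suggests. Fix $x$ and $z \in D \cap \tilde D$, and let $g_s = \im G_s(x,\xi_s(z))$ for $s \le \tau_z$. From the It\^o formula~\eqref{eq:charito} (taking imaginary parts), $g_s$ satisfies an equation of the form
\[
g_s = g_0 + \int_0^{s\wedge\tau_z} b_t\,dt + N_s,
\]
where the drift $b_t$ comes from the $\frac{1}{2N}\partial_\xi^2 G_t$ term and $N_s$ is the imaginary part of the martingale in~\eqref{eq:martingal}. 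The key structural point is that both the drift and the quadratic variation of $N_s$ are controlled \emph{pointwise in time} by $g_t$ itself times a factor $(\im\xi_t(z))^{-2}$: differentiating the resolvent identity twice gives $|\partial_\xi^2 G_t(x,\xi_t(z))| \le C\,g_t/(\im\xi_t(z))^2$ (via $\im\langle\delta_x, R^k\delta_x\rangle$-type bounds and the spectral decomposition, just as in the estimates for $\stil$ in the proof of Theorem~\ref{thm:martingalebound}), and similarly
\[
d[N]_t \le \frac{C}{N^2}\sum_{u,v}|\langle\delta_v, R_t(\xi_t(z))^2\delta_x\rangle|^2\,dt \le \frac{C}{N^2}\,\frac{g_t}{(\im\xi_t(z))^3}\,dt.
\]

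Next I would linearize by dividing through by the running maximum. Set $g_s^* = \sup_{t\le s} g_t$ and work on the event that $\mathcal{A}_S$ does not occur, so that Corollary~\ref{thm:inserts} gives $\int_0^{\tau_z}(\im\xi_t(z))^{-2}\,dt \le 4/(K_l\eta)$ and also $\inf \im\xi_t(z) \ge \eta/2$ by the stopping rule. The drift contribution up to time $s$ is then at most $\frac{C}{N}\,g_s^*\int_0^{s\wedge\tau_z}(\im\xi_t(z))^{-2}\,dt \le \frac{C}{N\eta}\,g_s^* \ll g_s^*$. For the martingale, time-change $N_s = \tilde B_{[N]_s}$ with $[N]_s \le \frac{C}{N^2}\,g_s^*\int_0^{s\wedge\tau_z}(\im\xi_t(z))^{-3}\,dt \le \frac{C}{N^2\eta}\cdot\frac{g_s^*}{\eta} = \frac{C}{(N\eta)^2}\cdot\frac{g_s^*}{\,1\,}$ — more precisely $[N]_s \le \frac{C}{N^2\eta^2}\,g_s^*$ after pulling out $\inf\im\xi_t(z)\ge\eta/2$ from one factor and using Corollary~\ref{thm:inserts} for the rest. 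Hence on the complement of $\mathcal{A}_S$,
\[
g_s^* \le g_0 + \frac{C}{N\eta}\,g_s^* + \sup_{u \le C g_s^*/(N\eta)^2}|\tilde B_u|.
\]
Absorbing the drift term (legitimate since $N\eta = N^\alpha \to \infty$) gives $g_s^* \le 2g_0 + 2\sup_{u\le Cg_s^*/(N\eta)^2}|\tilde B_u|$. If $g_s^* > N^\ell g_0$, then the Brownian supremum over a time interval of length $\le C g_s^*/(N\eta)^2$ must exceed $g_s^*/4$; by Brownian scaling and the reflection principle this has probability at most $C\exp(-c\,(g_s^*)^2/(g_s^*/(N\eta)^2)) = C\exp(-c\,g_s^*(N\eta)^2) \le C\exp(-c\,g_0 N^\ell (N\eta)^2)$. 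Since $g_0 = \im G_0(x,z) \ge \eta/(\ldots)$ is at worst polynomially small (it is bounded below because $\im z \ge \eta$ and $V$ is bounded, so $g_0 \ge c\eta$), this is $\le C\exp(-cN^{c'})$ for some $c'>0$, which beats any $N^{-p}$.

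Finally I would remove the conditioning and take the union over $x \in \{1,\dots,N\}$ and $z \in D\cap\tilde D$: since $|\tilde D| \le C(\eta r)^{-2}$ is polynomial in $N$ and Theorem~\ref{thm:martingalebound} gives $\pp(\mathcal{A}_S) \le C|\tilde D|e^{-cN\eta}$, the union bound yields $\pp(\mathcal{A}_G(\ell)) \le \pp(\mathcal{A}_S) + C N|\tilde D|\exp(-cN^{c'}) \le N^{-p}$ for large $N$. \textbf{The main obstacle} I anticipate is making the self-referential Brownian estimate fully rigorous: the time-change bound $[N]_s \le Cg_s^*/(N\eta)^2$ involves $g_s^*$ on both sides, so one cannot apply the reflection principle to a fixed deterministic horizon. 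The clean way around this is to introduce the stopping time $\sigma = \inf\{s : g_s \ge N^\ell g_0\}$, run the argument up to $s\wedge\sigma$ (where $g_{s\wedge\sigma}^* \le N^\ell g_0$ is an a-priori deterministic bound, so $[N]_{s\wedge\sigma} \le CN^\ell g_0/(N\eta)^2$ is deterministic), and then observe that $\{\sigma \le \tau_z\}$ forces $\sup_{u \le CN^\ell g_0/(N\eta)^2}|\tilde B_u| \ge \tfrac14 N^\ell g_0$, to which the reflection principle applies directly. Getting the exponent bookkeeping in $r$, $\eta$, $g_0$ to land so that the final bound is stretched-exponential (hence beats all $N^{-p}$) is the only delicate part, and it is precisely why $r$ in~\eqref{eq:rdef} carries the factors $N^{-2\theta}\eta^3$ and $N^{-(1+2\gamma)}\eta^3$.
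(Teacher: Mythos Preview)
Your overall strategy—controlling the running maximum $g_s^* = \sup_{t\le s}\im G_t(x,\xi_t(z))$ via a stopping-time/reflection-principle argument—is sound and would work, but the proof as written contains a concrete error in the quadratic variation.

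\textbf{The QV bound is wrong.} You wrote $d[N]_t \le \tfrac{C}{N^2}\sum_{u,v}|\langle\delta_v,R_t^2\delta_x\rangle|^2\,dt$, which is the expression from the \emph{trace} martingale in Theorem~\ref{thm:martingalebound} (with the extra $1/N$ from the average over $x$). For the local martingale~\eqref{eq:martingal} the prefactor is $1/N$, not $1/N^2$, and the integrand is $\sum_{u\le v}|\langle\delta_x,R_tP_{uv}R_t\delta_x\rangle|^2 \le 2(\sum_u|\langle\delta_x,R_t\delta_u\rangle|^2)^2 = 2(\im G_t/\im\xi_t)^2$ by the Ward identity. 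Hence the correct bound is
\[
[M]_T \;\le\; \frac{C}{N}\int_0^{T\wedge\tau}\Bigl(\frac{g_t}{\im\xi_t(z)}\Bigr)^2\,dt \;\le\; \frac{C}{N\eta}\,(g^*)^2,
\]
\emph{quadratic} in $g^*$, not linear. Your subsequent exponent $C\exp(-cg_0N^\ell(N\eta)^2)$ and the appeal to a lower bound on $g_0$ rest on the incorrect linear QV. With the right bound, the stopping-time version of your argument still goes through: on $\{\sigma\le\tau\}$ one has $[M]_\sigma \le C(N^\ell g_0)^2/(N\eta)$ and needs the Brownian supremum over that horizon to exceed $\tfrac14 N^\ell g_0$, which by the reflection principle has probability $\le C\exp(-cN\eta)$, independent of $g_0$ and $\ell$.

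A second, smaller issue: you cannot ``work on the event $\mathcal{A}_S^c$'' and still treat $N_s$ as a martingale, since $\mathcal{A}_S$ depends on the full path. The paper fixes this by introducing the stopping time $\tau = \tau_z \wedge \inf\{t:\int_0^{t}(\im\xi_s)^{-2}\,ds \ge 5/(K_l\eta)\}$, which deterministically enforces the conclusion of Corollary~\ref{thm:inserts} while preserving the martingale property; at the end one notes that $\tau=\tau_z$ on $\mathcal{A}_S^c$.

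\textbf{Comparison with the paper.} The paper does not use the reflection principle at all. Instead it exploits the quadratic structure $[M]_T^{1/2}\le C(N\eta)^{-1/2}g^*$ directly via Burkholder--Davis--Gundy in $L^q$: $\|g^*\|_q \le g_0 + C_q(N\eta)^{-1/2}\|g^*\|_q$, absorb, then Markov. This is the ``Gr\"onwall-type argument for martingales'' alluded to before the theorem, and it yields the polynomial bound $N^{-\ell q}$ with no need for a stopping time $\sigma$ or a lower bound on $g_0$. Your route, once corrected, actually gives a stronger stretched-exponential bound $\exp(-cN\eta)$, at the cost of the extra stopping-time bookkeeping.
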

\begin{proof} Fix $z \in D \cap \tilde{D}$ and consider the stopping time
\[\tau = \tau_z \wedge \inf \left\{ t \geq 0:  \int_0^{t \wedge \tau_z} \! \frac{1}{(\im \xi_s(z))^2} \, ds \geq \frac{5}{K_l \eta}
\right\},\]
where $K_l$ is the lower bound from Assumption \ref{thm:assump}. As in the proof Theorem \ref{thm:martingalebound}, the stopped process $\gtil_t = G_{t \wedge \tau}(x, \xi_{t \wedge \tau}(z))$ satisfies
\begin{align*} \gtil_t - \gtil_0 &= \int_0^{t \wedge \tau} \! \frac{1}{2N} \parder{^2}{\xi^2}  G_s(x, \xi_s(z)) \, ds \\
&  - \frac{1}{\sqrt{N}} \sum_{u \le v} \int_0^{t \wedge \tau} \! \langle \delta_x, R_s(\xi_s(z)) P_{uv} R_s(\xi_s(z)) \delta_x \rangle \, dB_{uv}(s).
\end{align*}
The drift component of $\gtil$ is bounded by
\begin{align*} \int_0^{t \wedge \tau} \! \frac{1}{2N} \left| \parder{^2}{\xi^2}  G_s(x, \xi_s(z))\right| \, ds &\le \left( \sup_{s \le T} \im \gtil_s \right) \int_0^{T \wedge \tau} \! \frac{1}{N(\im \xi_s(z))^2} \, ds \\
&\le \frac{5}{K_l N\eta}\left( \sup_{s \le T} \im \gtil_s \right),
\end{align*}
and, letting $M$ denote the martingale part of $\gtil$, its quadratic variation is bounded as follows,
\begin{align*} [M]_T &\le \frac{2}{N} \int_0^{T \wedge \tau} \! \sum_{u, v} \left| \langle \delta_x, R_s(\xi_s(z)) \delta_u \rangle \langle \delta_v, R_s(\xi_s(z)) \delta_x \rangle \right|^2 \, ds \nonumber\\
&= \frac{2}{N} \int_0^{T \wedge \tau}\! \left(\sum_{u} \left| \langle \delta_x, R_s(\xi_s(z)) \delta_u \rangle \right|^2\right) \left(\sum_{v} \left| \langle \delta_v, R_s(\xi_s(z)) \delta_x \rangle \right|^2\right) \, ds \nonumber\\
&= \frac{2}{N} \int_0^{T \wedge \tau} \! \left(\frac{\im G_s(x,\xi_s(z))}{\im \xi_s(z)}\right)^2 \, ds\\
&\le \left( \sup_{s \le T \wedge \tau} \im G_s(x,\xi_s(z)) \right)^2 \int_0^{T \wedge \tau} \!\frac{2}{N(\im \xi_s(z))^2} \, ds\\
&\le \frac{10}{K_l N\eta} \left( \sup_{s \le T} \im \gtil_s \right)^2.
\end{align*}
Hence,
\[\sup_{s \le T} \im \gtil_s \le \im \gtil_0 +\frac{5}{K_lN\eta} \sup_{s \le T} \im \gtil_s + \sup_{s \le T} |M_s|\]
so the Burkholder-Davis-Gundy inequality (with exponent $ q > 0 $ and constant $ C_q $) yields
\begin{align*}\left(1 -  \frac{5}{K_l N\eta}  \right)\left( \ee \left|\sup_{s \le T} \im \gtil_s\right|^q \right)^{1/q}  &\le \im  \gtil_0 +  \left( \ee \left| \sup_{s \le T} |M_{s}| \right|^q \right)^{1/q}\\
&\le \im  \gtil_0 + C_q  \left( \ee  [M]_T^{q/2} \right)^{1/q}\\
&\le \im \gtil_0  +  C_q \sqrt{\frac{10}{K_lN\eta}}  \left( \ee \left|\sup_{s \le T} \im \gtil_s\right|^q \right)^{1/q}.
\end{align*}
Since $N\eta \to \infty$, we can choose $N$ large enough such that $(1+C_q)\sqrt{\frac{10}{K_lN\eta}} < 1/2$. Rearranging and applying Markov's inequality shows
\[\pp\left( \sup_{s \le T \wedge \tau} \im G_s(x,\xi_s(z)) > 4N^\ell \im G_0(x,z)\right) \le N^{-\ell q}.\]
By Corollary \ref{thm:inserts}, $\tau = \tau_z$ on the event $\mathcal{A}_S^c$ and we conclude that
\[\pp\left( \sup_{s \le T \wedge \tau_z} \im G_s(x,\xi_s(z)) > 4N^\ell \im G_0(x,z)\right) \le N^{-\ell q} + \pp(\mathcal{A}_S),\]
so, choosing $q$ large enough, the theorem follows from the union bound.
\end{proof}

To prove Theorem \ref{thm:nonergodicity}, it remains only to combine the previous results with the fact that $G_T(x, \cdot)$ is the Stieltjes transform of the spectral measure at $x$.

\begin{proof}[Proof of Theorem \ref{thm:nonergodicity}]
We now specify the parameters $\alpha, \gamma, \ell > 0$ occuring in the spectral scale $\eta = N^{-1+\alpha}$, the definition of $r$ in~\eqref{eq:rdef}, and the event $\mathcal{A}_G(\ell)$ of Theorem \ref{thm:locresbound} by requiring that
\[0 < \gamma < \kappa - (\alpha + \ell + \delta).\]
Suppose that neither of the events $\mathcal{A}_S, \mathcal{A}_G(\ell)$ of Theorems \ref{thm:martingalebound} and \ref{thm:locresbound} occur, which is the case with probability $1 - N^{-p}$ provided $N$ is sufficiently large. For every $\lambda \in \sigma(H_T) \cap W$,
\[\sum_{x \in X_\lambda^c} |\psi_\lambda(x)|^2 \le \sum_{x \in X_\lambda^c} \sum_{E \in \sigma(H_T)} \frac{\eta^2}{(E - \lambda)^2 +  \eta^2} |\psi_E(x)|^2 = \eta \sum_{x \in X_\lambda^c}\im G_T(x,z)\]
with $z = \lambda + i\eta$. By Theorem \ref{thm:comparisonpoint}, there exists $w \in \tilde{D}$ such that  $|w - z| \le C\Upsilon T$, $\im w > K_l T/2$, and $|\xi_T(w) - z| \le C\eta N^{-(1+2\gamma)}$. Hence, for sufficiently large $N$,
\[\re w \in I \vcentcolon= [\lambda - N^{-1 + \kappa}, \lambda + N^{-1 + \kappa}], \quad \operatorname{dist}(\re w, \partial I) > d_N \vcentcolon=  \frac{1}{2}N^{-1 + \kappa}.\]
By the $\eta^{-2}$-Lipschitz continuity of $G_T(x, z)$, this yields
\begin{align*}\sum_{x \in X_\lambda^c} |\psi_\lambda(x)|^2 &\le \eta \sum_{x \in X_\lambda^c}\im G_T(x,\xi_T(w)) + CN^{-2\gamma}\\
&\le \eta N^\ell \sum_{x \in X_\lambda^c} \im G_0(x,w) + CN^{-2\gamma}.
\end{align*}
The sum may be bounded by recalling that $|V_x- \lambda| > d_N$ when $x \in X_\lambda^c$ and estimating
\begin{align*} \frac{1}{N} \sum_{x \in X_\lambda^c} \im G_0(x,w) &\le \frac{2}{N} \sum_{x \in X_\lambda^c} \frac{\im w}{(V_x - \re w)^2 + d_N^2}\\
&\le \frac{2 \, \im w}{d_N} \im S_0(\re w + i d_N)\\
&\le 4K_m (\im w) N^{1-\kappa} \log N,
\end{align*}
where we employed Assumption~\ref{thm:assump} in the last line. Since $\im w \le \eta + \Upsilon T\le CN^{-1 + \delta} \log N$, these observations combine to show that
\[\sum_{x \in X_\lambda^c} |\psi_\lambda(x)|^2 \le C N^{\alpha + \ell + \delta - \kappa} (\log N)^2 + C N^{-2\gamma} \le N^{-\gamma},\]
proving the first claim of the theorem. 

If, in addition, we require that $\alpha + \ell < \delta - \theta$, the second claim follows by the same token. Combining the Lipschitz continuity of $G_T(x,z)$ with $|\xi_T(w) - z| \le C\eta N^{-2\theta}$, we obtain
\begin{align*}
|\psi_\lambda(x)|^2 &\le \eta \, \im G_T(x, \lambda + i\eta)\\
&\le \eta \, \im G_T(x, \xi_T(w)) + CN^{-2\theta}\\
&\le \eta \, N^\ell |G_0(x, w)| + CN^{-2\theta}\\
&\le CN^{\alpha + \ell - \delta } + CN^{-2\theta} \le N^{-\theta}
\end{align*}
since $\im w > K_lT/2$.
\end{proof}

\section{Regularity Estimates for Random Potentials} \label{sec:v}
This section is devoted to the verification of Assumption \ref{thm:assump} in the case that the $\{V_x\}$ are drawn independently from a compactly supported density $\rho \in L^\infty$. We will assume that $\rho$ is bounded below in a neighborhood of $W$, i.e. there exists $\epsilon > 0$ such that
\[ \inf_{v \in W(\epsilon)} \rho(v) > 0\]
with $W(\epsilon) = W +[-\epsilon, \epsilon]$. We start by proving a concentration inequality in the spirit of Cram\'{e}r's theorem for $S_0$, which is uniform in spectral domains of the form
\[D(J, \zeta) = \left\{z \in \cp: \re z \in J \mbox{ and } \zeta \le \im z \le 1\right\}.\]

\begin{theorem} \label{thm:ficoncentration} Let $J \subset \rr$ be bounded. Then
\[ \pp\left(\sup_{z \in D(J,\zeta)} |\im S_0(z)- \ee \im S_0(z)| > \mu \right) \le C |J| \mu^{-2} \zeta^{-4}e^{-c\mu\sqrt{N\zeta}}\]
for all $\mu > 0$.
\end{theorem}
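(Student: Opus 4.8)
The plan is to obtain the uniform estimate from a pointwise Bernstein-type large-deviation bound combined with a union bound over a sufficiently fine net of $D(J,\zeta)$, using a deterministic Lipschitz bound for $S_0$ on that region. Writing $z = E + i\eta$ with $E \in J$ and $\eta \in [\zeta,1]$, one has
\[\im S_0(z) = \frac{1}{N}\sum_x f_z(V_x), \qquad f_z(v) = \frac{\eta}{(v-E)^2+\eta^2},\]
so that $\im S_0(z)$ is an average of i.i.d.\ random variables. First I would record that each summand satisfies $0 \le f_z(V_x) \le \eta^{-1} \le \zeta^{-1}$, that $\ee f_z(V_x) \le C$ (being $\pi$ times the Poisson extension of $\rho$), and --- crucially using $\rho \in L^\infty$ --- that
\[\ee f_z(V_x)^2 = \int \frac{\eta^2\,\rho(v)}{((v-E)^2+\eta^2)^2}\,dv \le \|\rho\|_\infty \int_{\rr} \frac{\eta^2}{(u^2+\eta^2)^2}\,du = \frac{\pi\|\rho\|_\infty}{2\eta} \le \frac{\pi\|\rho\|_\infty}{2\zeta}.\]
Hence the summands are bounded by $C\zeta^{-1}$ and have variance at most $C\zeta^{-1}$, and Bernstein's inequality yields, for each fixed $z \in D(J,\zeta)$ and every $t > 0$,
\[\pp\big(|\im S_0(z) - \ee\im S_0(z)| > t\big) \le 2\exp\!\left(-\frac{cNt^2\zeta}{1+t}\right).\]
A short computation shows the right-hand side is at most $2e^{-c't\sqrt{N\zeta}}$ once $t\sqrt{N\zeta}$ exceeds a fixed constant; in the complementary regime the polynomial prefactor $|J|\mu^{-2}\zeta^{-4}$ already renders the asserted bound trivial, so I may assume $\mu\sqrt{N\zeta}$ is large from now on.

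To make the estimate uniform I would use that, for every realization of $\{V_x\}$ and every $z \in D(J,\zeta)$,
\[|S_0'(z)| = \left|\frac{1}{N}\sum_x \frac{1}{(V_x-z)^2}\right| \le \frac{1}{N}\sum_x \frac{1}{|V_x-z|^2} \le \frac{1}{(\im z)^2} \le \frac{1}{\zeta^2},\]
so both $z \mapsto \im S_0(z)$ and $z \mapsto \ee\im S_0(z)$ are $\zeta^{-2}$-Lipschitz on $D(J,\zeta)$. Covering $D(J,\zeta)$ by a finite set $\mathcal{N}$ such that every point of $D(J,\zeta)$ lies within distance $\mu\zeta^2/4$ of $\mathcal{N}$ --- which can be arranged with $|\mathcal{N}| \le C|J|\mu^{-2}\zeta^{-4}$, since $D(J,\zeta)$ is a rectangle of side lengths $|J|$ and $1-\zeta$ --- the Lipschitz bound gives, for the nearest $z' \in \mathcal{N}$ to a given $z$,
\[|\im S_0(z) - \ee\im S_0(z)| \le |\im S_0(z') - \ee\im S_0(z')| + \mu/2.\]
Applying the pointwise bound at level $\mu/2$ at each $z' \in \mathcal{N}$ and taking a union bound then yields
\[\pp\Big(\sup_{z \in D(J,\zeta)} |\im S_0(z) - \ee\im S_0(z)| > \mu\Big) \le 2|\mathcal{N}|\,e^{-c\mu\sqrt{N\zeta}} \le C|J|\mu^{-2}\zeta^{-4}e^{-c\mu\sqrt{N\zeta}},\]
which is the claim.

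This is in essence a routine concentration-and-net argument, and I do not expect a genuine obstacle --- the points needing attention are bookkeeping ones. The variance bound, which is the input that sharpens the Bernstein exponent beyond what Hoeffding's inequality alone would give and thereby produces the $\sqrt{N\zeta}$ rate, genuinely relies on $\rho \in L^\infty$. The net mesh must be tuned precisely against the large Lipschitz constant $\zeta^{-2}$ so that the resulting cardinality is exactly the $|J|\mu^{-2}\zeta^{-4}$ of the statement; being cavalier there would worsen the power of $\zeta$. Keeping the $\zeta$- and $\mu$-dependence explicit and essentially optimal throughout --- in the sup-norm bound, the variance bound, and the Lipschitz bound, all three of which degrade as $\zeta \to 0$ --- is the only real care the proof demands.
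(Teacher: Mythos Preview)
Your proof is correct and follows essentially the same route as the paper: a pointwise exponential concentration bound for $\im S_0(z)$ (you invoke Bernstein's inequality, whereas the paper computes the moment generating function directly and sets $t=\sqrt{N\beta}$), then a $\zeta^{-2}$-Lipschitz estimate and a union bound over a net of cardinality $C|J|\mu^{-2}\zeta^{-4}$. One small caveat: your reduction of the Bernstein exponent $cN\mu^2\zeta/(1+\mu)$ to $c'\mu\sqrt{N\zeta}$ actually requires $N\zeta$ bounded below (when $\mu>1$ the condition ``$\mu\sqrt{N\zeta}$ large'' alone does not suffice), but this is the only regime of interest here and the paper's direct choice $t=\sqrt{N\beta}$ avoids the case distinction altogether.
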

\begin{proof} Let $z = \alpha + i\beta$. Performing the substitution $v = (\tilde{v}-\alpha)/\beta$, we obtain
\begin{align*} \ee e^{t \im S_0(z)} &=  \left(\beta \int \! \rho(\alpha  + \beta v) \exp\left(\frac{t}{N \beta } \frac{1}{1 + v^2} \right) \, dv \right)^N\\
&\le \left(1 + \frac{t \ee \im S_0(z)}{N}  + \frac{t^2 \|\rho\|_\infty}{N^2 \beta}  \int \! \left(\frac{1}{1 + v^2}\right)^2 \exp\left(\frac{t}{N \beta } \frac{1}{1 + v^2} \right) \, dv \right)^N
\end{align*}
by Taylor's theorem. We choose $t = \sqrt{N\beta}$. Since $(1 + v^2)^{-2} \in L^1$ and
\[\frac{t}{N \beta } \frac{1}{1 + v^2}  \le \sqrt{2},\]
there exists an absolute constant $C < \infty$ such that
\begin{align*} \ee e^{t \im S_0(z)} &\le \left(1 + \frac{t \ee \im S_0(z)}{N} + C \left(\frac{t}{N\beta}\right)^2 \beta \right)^N\\
&\le \exp\left(N\left( \frac{t \ee \im S_0(z)}{N} + C \left(\frac{t}{N\beta}\right)^2 \beta \right) \right)\\
&= \exp(t\ee \im S_0(z)) \exp\left(\frac{Ct^2}{N\beta}\right).
\end{align*}
Using an exponential Chebyshev argument, we conclude that
\begin{align*}
\pp\left(\im S_0(z) \geq \ee \im S_0(z) + \mu \right) &\le e^{-t(\ee \im S_0(z) + \mu)} \ee e^{t \im S_0(z)}\\
&\le e^{-t\mu} \exp\left(\frac{Ct^2}{N\beta}\right)\\
&\le C \,e^{- c\mu \sqrt{N \zeta}}.
\end{align*}
The proof of the lower bound works the same way. Replacing the previous Chebyshev bound with
\[\pp\left(\im S_0(z) \le \ee \im S_0(z) - \mu \right) \le e^{t(\ee \im S_0(z) - \mu)} \ee e^{-t \im S_0(z)},\]
yields that for every fixed $z \in D(J,\zeta)$
\[ \pp\left(|\im S_0(z) - \ee \im S_0(z)| > \mu \right) \le C\, e^{-c\mu\sqrt{N\zeta}}.\]
Since $D(J,\zeta)$ is bounded, there exists a set of at most $C|J| \mu^{-2} \zeta^{-4}$ points $\{z_k\} \subset D(J,\zeta)$ such that for every $z \in D(J,\zeta)$ there exists $k$ with $|z - z_k| \le \frac{\mu\zeta^2}{12}$. By the union bound,
\[ \pp\left(\sup_k |\im S_0(z_k) - \ee \im S_0(z_k)| > \frac{\mu}{3} \right) \le C |J| \mu^{-2} \zeta^{-4} e^{-c\mu\sqrt{N\zeta}}.\]
But $\im S_0$ and $\ee \im S_0$ are $(2/\zeta)^{2}$-Lipschitz continuous in $D(J,\zeta)$ and thus
\begin{align*} |\im S_0(z) - \ee\im  S_0(z)| &\le |\im S_0(z) -\im S_0(z_k)| + |\im S_0(z_k) - \ee \im S_0(z_k)|\\ 
&+ |\ee \im S_0(z_k) - \ee \im S_0(z)| \le \mu,
\end{align*}
extending the bound to all $z \in D(J,\zeta)$.
\end{proof}

Since $\rho$ was assumed to be bounded below in $W(\epsilon)$, the corresponding lower bound for $\im S_0$ in the $\epsilon$-fattening of the original spectral domain $D$ follows immediately, proving the second point in Assumption \ref{thm:assump}.

\begin{corollary} There exists $K_l \in (0, \infty)$ such that
\[ \pp\left( \inf \left\{ \im S_0 (z): \im z \geq \eta \mbox{ and } \operatorname{dist}(z, D) \le \epsilon \right\} < K_l \right) \le C \eta^{-4}e^{-c\sqrt{N\eta}}.\]
\end{corollary}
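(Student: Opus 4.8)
The plan is to obtain this as an immediate consequence of Theorem~\ref{thm:ficoncentration} once a deterministic lower bound for $\ee\,\im S_0$ is in place. Write $\mathcal{D}_\epsilon \vcentcolon= \{z \in \cp : \im z \ge \eta \text{ and } \operatorname{dist}(z, D) \le \epsilon\}$. Since $D = W + i[\eta,1]$ with $W$ bounded and $\im z \ge \eta$, an elementary computation shows that $z = a + ib \in \mathcal{D}_\epsilon$ forces $a \in W(\epsilon)$ and $\eta \le b \le 1 + \epsilon$; in particular, after shrinking $\epsilon$ so that in addition $\rho \ge c_\rho > 0$ on $W(2\epsilon)$ (which is harmless, since Assumption~\ref{thm:assump} only asks for the existence of \emph{some} such $\epsilon$), the part of $\mathcal{D}_\epsilon$ with $b \le 1$ lies inside $D(W(\epsilon), \eta)$.

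For the deterministic bound I would use that, because the $V_x$ are i.i.d.\ with law $\rho$,
\[\ee\,\im S_0(z) = \int \frac{\im z}{(v - \re z)^2 + (\im z)^2}\, \rho(v)\, dv\]
is the Poisson extension of $\rho$. For $z \in \mathcal{D}_\epsilon$ with $\im z \le 1$ the interval $[\re z - \epsilon, \re z + \epsilon]$ is contained in $W(2\epsilon)$, where $\rho \ge c_\rho$, so
\[\ee\,\im S_0(z) \ge c_\rho \int_{\re z - \epsilon}^{\re z + \epsilon} \frac{\im z}{(v - \re z)^2 + (\im z)^2}\, dv = 2 c_\rho \arctan\frac{\epsilon}{\im z} \ge 2 c_\rho \arctan \epsilon .\]
The remaining points, with $1 < \im z \le 1 + \epsilon$, need no probabilistic input at all: since $\supp\rho$ is compact and $\re z$ ranges over the bounded set $W(\epsilon)$, the quantity $(V_x - \re z)^2 + (\im z)^2$ is bounded above by a deterministic constant $C_0$, so $\im S_0(z) \ge 1/C_0$ almost surely there. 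I would then fix $K_l \vcentcolon= \min\{c_\rho \arctan\epsilon,\ 1/C_0\} \in (0,\infty)$, so that $\ee\,\im S_0(z) \ge 2K_l$ whenever $z \in \mathcal{D}_\epsilon$ and $\im z \le 1$, while $\im S_0(z) \ge K_l$ deterministically whenever $z \in \mathcal{D}_\epsilon$ and $\im z > 1$.

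Finally I would apply Theorem~\ref{thm:ficoncentration} with $J = W(\epsilon)$, $\zeta = \eta$ and $\mu = K_l$, which gives
\[\pp\left(\sup_{z \in D(W(\epsilon), \eta)} |\im S_0(z) - \ee\,\im S_0(z)| > K_l\right) \le C\, |W(\epsilon)|\, K_l^{-2}\, \eta^{-4}\, e^{-c K_l \sqrt{N\eta}} \le C\, \eta^{-4}\, e^{-c\sqrt{N\eta}},\]
after absorbing $|W(\epsilon)|$ and the powers of $K_l$ into $C$ and $c$. On the complement of this event every $z \in \mathcal{D}_\epsilon$ with $\im z \le 1$ satisfies $\im S_0(z) \ge \ee\,\im S_0(z) - K_l \ge K_l$, and for $\im z > 1$ the same bound holds deterministically, so $\inf_{z \in \mathcal{D}_\epsilon} \im S_0(z) \ge K_l$, which is exactly the asserted estimate. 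I do not expect a genuine obstacle here: the only points requiring care are the geometric bookkeeping that keeps $\re z$ inside the region where $\rho$ is bounded below and $\im z$ bounded above (so that $\arctan(\epsilon/\im z)$ does not degenerate), and the trivial separate treatment of the thin strip $1 < \im z \le 1 + \epsilon$, on which Theorem~\ref{thm:ficoncentration} as literally stated does not directly apply.
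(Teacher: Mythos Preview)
Your proof is correct and follows exactly the route the paper intends: a deterministic lower bound on $\ee\,\im S_0$ from the positivity of $\rho$ on $W(\epsilon)$, combined with the concentration estimate of Theorem~\ref{thm:ficoncentration}. The paper itself leaves the argument implicit (``follows immediately''); your write-up is simply a careful expansion, including the harmless shrinking of $\epsilon$ and the separate deterministic handling of the thin strip $1<\im z\le 1+\epsilon$ not literally covered by $D(J,\zeta)$.
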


To finish the proof of Assumption~\ref{thm:assump}, we combine the previous estimates with a standard argument for the Hilbert transform to produce a logarithmic bound for $|S_0|$.

\begin{corollary}\label{thm:kbound} There exists  $K_m \in (0, \infty)$ such that
\[ \pp\left(\sup_{\im z > \eta} |S_0 (z)| > K_m  + K_m \log \left(1 + \eta^{-2} \right) \right) \le C \eta^{-4}e^{-c\sqrt{N\eta}}.\]
\end{corollary}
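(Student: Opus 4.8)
\textbf{Proof plan for Corollary \ref{thm:kbound}.}

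The plan is to bound $|S_0(z)|$ by controlling its real and imaginary parts separately, with the real part handled via the Hilbert transform identity. First I would note that the imaginary part is already under control: the previous corollary shows $\im S_0(z) \geq K_l$ with the stated probability, and the upper bound $\im S_0(z) \leq K_m$ uniformly in $\im z \geq \eta$ follows from Theorem \ref{thm:ficoncentration} together with a crude deterministic bound on $\ee \im S_0(z)$ (using $\rho \in L^\infty$ and $\int (1+v^2)^{-1}\,dv = \pi$, one gets $\ee \im S_0(z) \leq \pi \|\rho\|_\infty$ independent of $z$). So the work is entirely in bounding $|\re S_0(z)|$.

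The key step is the standard representation of the real part of a Stieltjes transform as a (regularized) Hilbert transform of the imaginary part. Writing $z = E + i\beta$ with $\beta \geq \eta$, one has the exact identity
\[
\re S_0(E + i\beta) = \frac{1}{\pi} \int_{\rr} \frac{E - E'}{(E - E')^2 + \beta^2}\, \pi\, \im S_0(E' + i\beta)\, \frac{dE'}{\pi},
\]
or more conveniently one uses that $S_0$ is analytic in the upper half-plane and integrates $\partial_E \re S_0 = -\partial_\beta \im S_0$ (Cauchy–Riemann) from a far-away reference point where $|S_0|$ is trivially $O(1)$: since $|V_x| \leq L$ for some fixed $L$ (compact support), $|S_0(E+i\beta)| \leq 1/\beta$ everywhere and $|S_0(E+i\beta)| \leq 1/(|E| - L)$ for $|E| > L$, so $|\re S_0|$ is $O(1)$ outside a fixed compact set. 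Then for $E$ in the relevant bounded range, $\re S_0(E+i\beta) = \re S_0(E + i) + \int_\beta^1 \partial_\beta \im S_0(E + i\beta')\,d\beta'$, and $\partial_\beta \im S_0(E+i\beta') = -\frac{1}{N}\sum_x \frac{2\beta'(V_x - E)\cdot(-1)+\dots}{\dots}$ — more efficiently, bound $|\partial_\beta \im S_0| = |\im \partial_z S_0| \leq |S_0'| \leq \frac{1}{N}\sum_x |V_x - z|^{-2} = \frac{\im S_0}{\beta'}$, giving $|\re S_0(E+i\beta) - \re S_0(E+i)| \leq \int_\beta^1 \frac{\im S_0(E+i\beta')}{\beta'}\,d\beta'$. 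On the event of Theorem \ref{thm:ficoncentration} (with $J$ a fixed bounded interval containing the relevant $E$ values, $\zeta = \eta$, $\mu$ a fixed constant) we have $\im S_0(E+i\beta') \leq \pi\|\rho\|_\infty + \mu =: K_m'$ uniformly, so the integral is at most $K_m' \log(1/\beta) \leq K_m'\log(1+\eta^{-2})$, up to adjusting the constant. Combining with the $O(1)$ bound on $\re S_0(E+i)$ and on $\im S_0$ gives $|S_0(z)| \leq K_m + K_m\log(1+\eta^{-2})$ on an event whose complement has probability $\leq C\eta^{-4}e^{-c\sqrt{N\eta}}$, as claimed.

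The main obstacle is a minor one: making the Hilbert-transform / Cauchy–Riemann integration argument uniform in $E$ on the bounded set while only paying for the single concentration event of Theorem \ref{thm:ficoncentration}. This requires choosing $J$ large enough to contain all relevant real parts and being slightly careful that the deterministic a priori bounds (from compact support of $\rho$) cover the region $\im z \geq \eta$ with $\re z$ outside $J$, so that the logarithmic growth only comes from integrating the bounded $\im S_0$ over the scale range $[\eta, 1]$. There is no real analytic difficulty here — it is the classical fact that an $L^\infty$ bound on the density produces at most a logarithmic bound on the Hilbert transform at scale $\eta$ — so the proof is short once Theorem \ref{thm:ficoncentration} and the compact-support a priori bounds are in hand.
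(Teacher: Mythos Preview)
Your approach is correct and arrives at the same conclusion, but by a genuinely different route from the paper. The paper represents $\re S_0(\alpha+i\beta)$ via the conjugate Poisson kernel at half-height,
\[
\re S_0(\alpha+i\beta)=\int \im S_0\!\left(t-\alpha+i\tfrac{\beta}{2}\right) Q_{i\beta/2}(t)\,dt,
\]
splits the convolution into $|t|\le 1$ and $|t|>1$, and extracts the logarithm from $\int_{-1}^{1}\frac{|t|}{t^2+\beta^2}\,dt$. You instead integrate vertically: starting from the trivial bound $|\re S_0(E+i)|\le 1$, you use $|\partial_{\beta'}\re S_0|\le |S_0'|=\im S_0/\beta'$ and pick up the logarithm from $\int_\eta^1 d\beta'/\beta'$. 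Both arguments feed on the same uniform bound $\im S_0\le K_m$ supplied by Theorem~\ref{thm:ficoncentration}; yours is arguably more elementary since it avoids the kernel identity, at the modest cost of needing the $\im S_0$ bound on the full strip $\beta'\in[\eta,1]$ rather than at a single height (which Theorem~\ref{thm:ficoncentration} gives anyway).

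One small slip to fix: the Cauchy--Riemann relation you want is $\partial_\beta \re S_0 = -\partial_E \im S_0$, so the integrand in your vertical formula should be $\partial_E \im S_0$, not $\partial_\beta \im S_0$. This does not affect the argument, since you immediately bound the integrand by $|S_0'|=\im S_0/\beta'$, which dominates either partial derivative.
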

\begin{proof} Using that $\rho$ is compactly supported and the trivial estimate
\[ \im S_0(z) \le \frac{1}{\operatorname{dist}(z, \supp \rho)},\]
Theorem \ref{thm:ficoncentration} with $J = \supp \rho + [-1,1]$ and $\zeta = \eta/2$ shows that there exist  $C,c,K_m \in (0, \infty)$ such that
\[ \pp\left(\sup_{\im z > \frac{\eta}{2}} \im S_0(z) > K_m \right) \le C \eta^{-4}e^{-c\sqrt{N\eta}}.\]
Letting
\[Q_z(t) = \frac{1}{\pi}\frac{t-\re z}{(t - \re z)^2 + (\im z)^2}\]
be the conjugate Poisson kernel and writing $z = \alpha + i(\beta/2)$, we see that on the complement of this event
\begin{align*} \re S_0(\alpha + i\beta) &= \int \! \im S_0\left(t - z \right) Q_{i\frac{\beta}{2}}(t) \, dt\\
&= \int_{[-1,1]} \! \im S_0\left(t -z\right) Q_{i\frac{\beta}{2}}(t) \, dt +  \int_{\rr \setminus [-1,1]} \! \im S_0\left(t - z\right) Q_{i\frac{\beta}{2}}(t) \, dt\\
&\le K_m \frac{1}{\pi} \int_{-1}^1 \, \frac{|t|}{t^2 + \beta^2} \, dt +  \frac{1}{\pi} \int \! \im S_0\left(t + i\frac{\beta}{2}\right) \, dt\\
&\le K_m \log \left(1 + \beta^{-2} \right) + 1.
\end{align*}
\end{proof}

\subsection*{Acknowledgments}
It is a pleasure to thank L.~Benigni for interesting discussions about the Rosenzweig-Porter model during the PCMI Summer Session (funded by NSF grant DMS:1441467). We also thank an anonymous referee for removing a superfluous assumption from an earlier version of this paper. This work was supported by the DFG (WA 1699/2-1).
\bibliographystyle{abbrv}
\bibliography{References}
\bigskip
\begin{minipage}{0.5\linewidth}
\noindent Per von Soosten\\
Zentrum Mathematik, TU M\"{u}nchen\\
Boltzmannstra{\ss}e 3, 85747 Garching\\
Germany\\
\verb+vonsoost@ma.tum.de+
\end{minipage}%
\begin{minipage}{0.5\linewidth}
\noindent Simone Warzel\\
Zentrum Mathematik, TU M\"{u}nchen\\
Boltzmannstra{\ss}e 3, 85747 Garching\\
Germany\\
\verb+warzel@ma.tum.de+
\end{minipage}
\end{document}